\crefname{ineq}{Inequality}{Inequalities}
\crefname{enumi}{}{}
\crefname{step}{Step}{Steps}
\crefname{lemma}{Lemma}{Lemmas}
\crefname{corollary}{Corollary}{Corollaries}
\crefname{proposition}{Proposition}{Propositions}
\newcommand*{\ALG}{\ensuremath{\textsc{Balance}}\xspace}
\newcommand*{\OPT}{\ensuremath{\textsc{Opt}}\xspace}
\newcommand*{\oho}{\ensuremath{\text{overhead}_\OPT}\xspace}
\newcommand*{\oha}{\ensuremath{\text{overhead}_\ALG}\xspace}
\newcommand*{\comb}{\ensuremath{\text{com}}\xspace}
\newtheorem{theorem}{Theorem}[section]
\newtheorem{lemma}[theorem]{Lemma}
\newtheorem{corollary}[theorem]{Corollary}
\newtheorem{proposition}[theorem]{Proposition}
\begin{document}

\title{Non-Clairvoyant Scheduling to Minimize Max Flow Time on a Machine with Setup Times\thanks{This work was partially supported by the German Research Foundation (DFG)
within the Collaborative Research Centre ``On-The-Fly Computing'' (SFB 901)}~~\footnote{A conference version of this paper has been accepted for publication in the proceedings of the ``15th Workshop on Approximation and Online Algorithms'' (WAOA).}}

\author{Alexander M\"acker \and Manuel Malatyali \and Friedhelm Meyer auf der Heide \and S\"oren Riechers \\ [0.4em]
	Heinz Nixdorf Institute \& 	Computer Science Department\\
	Paderborn University, Germany\\[0.2em]
	\{amaecker, malatya, fmadh, soerenri\}@hni.upb.de}
\date{}

\maketitle

\begin{abstract}
Consider a problem in which $n$ jobs that are classified into $k$ types arrive over time at their release times and are to be scheduled on a single machine so as to minimize the maximum flow time.
The machine requires a setup taking $s$ time units whenever it switches from processing jobs of one type to jobs of a different type.
We consider the problem as an online problem where each job is only known to the scheduler as soon as it arrives and where the processing time of a job only becomes known upon its completion (non-clairvoyance).

We are interested in the potential of simple ``greedy-like'' algorithms.
We analyze a modification of the FIFO strategy and show its competitiveness to be $\Theta(\sqrt{n})$, which is optimal for the considered class of algorithms.
For $k=2$ types it achieves a constant competitiveness.
Our main insight is obtained by an analysis of the smoothed competitiveness.
If processing times $p_j$ are independently perturbed to $\hat p_j = (1+X_j)p_j$, 
we obtain a competitiveness of $O(\sigma^{-2} \log^2 n)$ when $X_j$ is drawn from a uniform or a (truncated) normal distribution with standard deviation $\sigma$.
The result proves that bad instances are fragile and ``practically'' one might expect a much better performance than given by the $\Omega(\sqrt{n})$-bound.
\end{abstract}

\section{Introduction}
Consider a scheduling problem in which there is a single machine for processing jobs arriving over time.
Each job is defined by a release time at which it arrives, a size describing the time required to process it and it belongs to exactly one of $k$ types.
Whenever the machine switches from processing jobs of one type to jobs of a different type, a setup needs to take place for the reconfiguration of the machine.
During a setup the machine cannot process any workload.
A natural objective in such a model is the minimization of the time each job remains in the system.
This objective was introduced in \cite{flowtime} as maximum flow time, defined as the maximum time a job spends in the system, that is, the time between the arrival of a job and its completion.
It describes the quality of service as, for example, perceived by users and aims at schedules being responsive to each job.
In settings in which user interaction is present or processing times may depend on other further inputs not known upon the arrival of a job, it is also natural to assume the concept of non-clairvoyance, as introduced in \cite{nonclairvoyant}.

There are several applications for a model with job types and setup times mentioned in the literature.
Examples are settings in which a server has to answer requests of different types, depending on the data to be loaded into memory and to be accessed by the server \cite{saksReport,operator}; manufacturing systems, in which machines need to be reconfigured or cleaned during the manufacturing of different customer orders \cite{manufacturing}; or a setting where an intersection of two streets is equipped with traffic lights and where setup times describe the time drivers need for start-up once they see green light \cite{traffic}.

In this paper, we study the potential of ``greedy-like'' online algorithms in terms of their (smoothed) competitiveness.
The formal model and notions are given in \cref{sec:model}.
In \cref{sec:algo} we analyze the competitiveness of ``greedy-like'' algorithms and show matching upper and lower bounds of $\Theta(\sqrt{n})$, where the bound is achieved by a simple modification of the \emph{First In First Out} (FIFO) strategy.
For the special case of $k=2$ types, the competitiveness improves to $O(1)$.
Our main result is an analysis of the smoothed competitiveness of this algorithm in \cref{sec:smoothed}, which is shown to be $O(\sigma^{-2} \log^2 n)$ where $\sigma$ denotes the standard deviation of the underlying smoothing distribution.
It shows worst case instances to be fragile against random noise and that, except on some pathological instances, the algorithm achieves a much better performance than suggested by the worst case bound on the competitiveness.

\section{Model \& Notions}
\label{sec:model}
We consider a scheduling problem in which $n$ jobs, partitioned into $k$ types, are to be scheduled on a single machine. 
Each job $j$ has a size (processing time) $p_j \in \mathbb{R}_{\geq 1}$, a release time $r_j \in \mathbb{R}_{\geq 0}$ and a parameter $\tau_j$ defining the type it belongs to.
The machine can process at most one job at a time.
Whenever it switches from processing jobs of one type to a different one and before the first job is processed, a setup taking constant $s$ time units needs to take place during which the machine cannot be used for processing.
The goal is to compute a non-preemptive schedule, in which each job runs to completion without interruption once it is started, that minimizes the maximum flow time $F \coloneqq \max_{1 \leq j \leq n} F_j$ where 
$F_j$ is the amount of time job $j$ spends in the system.
That is, a job $j$ arriving at $r_j$, started in a schedule at $t_j$ and completing its processing at $c_j \coloneqq t_j + p_j$ has a flow time $F_j \coloneqq c_j-r_j$.

Given a schedule, a \emph{batch} is a sequence of jobs, all of a common type $\tau$, that are processed in a contiguous interval without any intermediate setup.
For a batch $B$, we use $\tau(B)$ to denote the common type $\tau$ of $B$'s jobs and $w(B) \coloneqq \sum_{j \in B} p_j$ to denote its workload.
We refer to setup times and idle times as \emph{overhead} and overhead is associated to a job $j$ if it directly precedes $j$ in the schedule.
For an interval $I = [a,b]$ we also use $l(I) \coloneqq a$ and $r(I) \coloneqq b$ and $w(I) \coloneqq \sum_{j : r_j \in I} p_j$ to denote the workload released in interval $I$.

\paragraph*{Non-Clairvoyant Greedy-like Online Algorithms}
We consider our problem in an \emph{online} setting where jobs arrive over time at their release times and are not known to the scheduler in advance.
Upon arrival the scheduler gets to know a job together with its type but does not learn about its processing time, which is only known upon its completion (\emph{non-clairvoyance}) \cite{nonclairvoyant}.
We are interested in the potential of conceptually simple and efficient \emph{greedy-like} algorithms.
For classical combinatorial offline problems, the concept of greedy-like algorithms has been formalized by Borodin et al.\ in \cite{priority} by \emph{priority algorithms}.
We adopt this concept and for our online problem we define greedy-like algorithms to work as follows:
When a job completes (and when the first job arrives), the algorithm determines a total ordering of \emph{all possible} jobs without looking at the actual instance.
It then chooses (among already arrived yet unscheduled jobs) the next job to be scheduled by looking at the instance and selecting the job coming first according to this ordering.

\paragraph*{Quality Measure}
To analyze the quality of online algorithms, we facilitate competitive analysis.
It compares solutions of the online algorithm to solutions of an optimal offline algorithm which knows the complete instance in advance.
Precisely, an algorithm \textsc{Alg} is called $c$-\emph{competitive} if, on any instance $\mathcal{I}$, $F(\mathcal{I}) \leq c \cdot F^*(\mathcal{I})$, where $F(\mathcal{I})$ and $F^*(\mathcal{I})$ denote the flow time of \textsc{Alg} and an optimal (clairvoyant) offline solution on instance $\mathcal{I}$, respectively.

Although competitive analysis is the standard measure for analyzing online algorithms, it is often criticized to be overly pessimistic.
That is, a single or a few pathological and very rarely occurring instances can significantly degrade the quality with respect to this measure.
To overcome this, various alternative measures have been proposed in the past (e.g.\ see \cite{compAlter1,compAlter2,compAlter3}).
One approach introduced in \cite{smoothedComp} is \emph{smoothed competitiveness}.
Here the idea is to slightly perturb instances dictated by an adversary by some random noise and then analyze the expected competitiveness, where expectation is taken with respect to the random perturbation. 
Formally, if input instance $\mathcal{I}$ is smoothed according to some smoothing (probability) distribution $f$ and if we use $N(\mathcal{I})$ to denote the instances that can be obtained by smoothing $\mathcal{I}$ according to $f$, the smoothed competitiveness $c_\text{smooth}$ is defined as $c_\text{smooth} \coloneqq \sup_\mathcal{I} \mathbb{E}_{\mathcal{\hat I} \overset{f}{\gets} N(\mathcal{I})} \left[\frac{F(\mathcal{\hat I)}}{F^*(\mathcal{\hat I})}\right]$.
We will smoothen instances by randomly perturbing processing times.
We assume the adversary to be \emph{oblivious} with respect to perturbations.
That is, the adversary constructs the instance based on the knowledge of the algorithm and $f$ (so that $\mathcal{I}$ is defined at the beginning and is not a random variable).

\section{Related Work}
The problem supposedly closest related to ours is presented in a paper by Divakaran and Saks \cite{saks}.
They consider the clairvoyant variant in which the processing time of each job is known upon arrival.
Additionally, they allow the setup time to be dependent on the type.
For this problem, they provide an $O(1)$-competitive online algorithm. 
Also, they show that the offline problem is NP-hard in case the number $k$ of types is part of the input.
In case $k$ is a constant, it was known before that the problem can be solved optimally in polynomial time by a dynamic program proposed by Monma and Potts in \cite{monma}.
When all release times are identical, then the offline problem reduces to the classical makespan minimization problem with setup times.
It has been considered for $m$ parallel machines and it is known \cite{wads} to be solvable by an FPTAS if $m$ is assumed to be constant; 
here an (F)PTAS is an approximation algorithm that finds a schedule with makespan at most by a factor of $(1+\varepsilon)$ larger than the optimum in time polynomial in the input size (and $\frac{1}{\varepsilon}$).
For variable $m$, Jansen and Land propose a PTAS for this problem in \cite{land}. 
Since in general a large body of literature for scheduling with setup considerations has evolved over time, primarily in the area of operations research, the interested reader is referred to the surveys by Allahverdi et al.\ \cite{ali1,ali2,ali3}.

Our model can also be seen as a generalization of classical models without setup times.
In this case, it is known that FIFO is optimal for minimizing maximum flow time on a single machine.
On $m$ parallel machines FIFO achieves a competitiveness of $3-2/m$ (in the (non-)preemptive case) as shown by Mastrolilli in \cite{fifo}.
Further results include algorithms for (un-)related machines with speed augmentation given by Anand et al.\ in \cite{journalMaxFlowtime} and for related machines proposed by Bansal and Cloostermans in \cite{journalMaxFlowtime2}.

The concept of smoothed analysis has so far, although considered as an interesting alternative to classical competitiveness (e.g.\ \cite{compAlter1,compAlter2,compAlter3}), only been applied to two problems.
In \cite{smoothedComp}, Bechetti et al.\ study the Multilevel Feedback Algorithm for minimizing total flow time on parallel machines when preemption is allowed and non-clairvoyance is assumed. 
They consider a smoothing model in which initial processing times are integers from the interval $[1, 2^K]$ and are perturbed by replacing the $k$ least significant bits by a random number from $[1, 2^k]$.
They prove a smoothed competitiveness of $O((2^k/\sigma)^3+(2^k/\sigma)^2 2^{K-k})$, where $\sigma$ denotes the standard deviation of the underlying distribution.
This, for example, becomes $O(2^{K-k})$ for the uniform distribution.
This result significantly improves upon the lower bounds of $\Omega(2^K)$ and $\Omega(n^{\frac{1}{3}})$ known for the classical competitiveness of deterministic algorithms \cite{nonclairvoyant}.
In \cite{taskSystems}, Sch\"afer and Sivadasan apply smoothed competitive analysis to metrical task systems (a general framework for online problems covering, for example, the paging and the $k$-server problem).
While any deterministic online algorithm is (on any graph with $n$ nodes) $\Omega(n)$-competitive, the authors, amongst others, prove a sublinear smoothed competitiveness on graphs fulfilling certain structural properties.
Finally, a notion similar to smoothed competitiveness has been applied by Scharbrodt, Schickinger and Steger in \cite{averageComp}.
They consider the problem of minimizing the total completion time on parallel machines and analyze the Shortest Expected Processing Time First strategy.
While it is $\Omega(n)$-competitive, they prove an expected competitiveness, defined as $\mathbb{E}\left[\frac{\textsc{Alg}}{\OPT}\right]$, of $O(1)$ if processing times are drawn from a gamma distribution.

\section{A Non-Clairvoyant Online Algorithm}
\label{sec:algo}
In this section, we present a simple greedy-like algorithm and analyze its competitiveness.
The idea of the algorithm \ALG, as presented in \cref{fig:alg}, is to find a tradeoff between preferring jobs with early release times and jobs that are of the type the machine is currently configured for.
This is achieved by the following idea: Whenever the machine is about to idle at some time $t$, \ALG checks whether there is a job $j$ available that is of the same type $\tau_j$ as the machine is currently configured for, denoted by $active(t)$.
If this is the case and if there is no job $j'$ with a ``much smaller'' release time than $j$, job $j$ is assigned to the machine. 
The decision whether a release time is ``much smaller'' is taken based on a parameter $\lambda$, called \emph{balance parameter}.
This balance parameter is grown over time based on the maximum flow time encountered so far and, at any time, is of the form $\alpha^q$, for some $q \in \mathbb{N}$ which is increased over time and some constant $\alpha$ determined later.\footnote{A variant of this algorithm with a fixed $\lambda$ and hence without \cref{step:grow} has been previously mentioned in \cite{saks} as an algorithm with $\Omega(n)$-competitiveness for the clairvoyant variant of our problem.}
Note that \ALG is a greedy-like algorithm by using the adjusted release times for determining the ordering of jobs.

\begin{algorithm}
\begin{enumerate}[(1)]
\item Let $\lambda = \alpha$. \hfill $\vartriangleright \text{for some constant } \alpha$
\item If the machine idles at time $t$,\\ process available job with smallest \textbf{adjusted release time} $\bar r_j(t)$
\[ \bar r_j(t) \coloneqq
  \begin{cases}
    r_j       & \quad \text{if } \tau_j = \text{active}(t)\\
    r_j + \lambda  & \quad \text{else}\\
  \end{cases}
\]
after doing a setup if necessary.\\
To break a tie, prefer job $j$ with $\tau_j = \text{active}(t)$.
\item As soon as a job $j$ completes with $F_j \geq \alpha \lambda$, set $\lambda \coloneqq \alpha \lambda$. \label[step]{step:grow}
\end{enumerate}
\caption{Description of \ALG}
\label{fig:alg}
\end{algorithm}
\subsection{Basic Properties of \ALG}
The following two properties follow from the definition of \ALG and relate the release times and flow times of consecutive jobs, respectively.
For a job $j$, let $\lambda(j)$ denote the value of $\lambda$ when $j$ was scheduled.
\begin{proposition}
\label{prop:releaseDiff}
Consider two jobs $j_1$ and $j_2$.
If $\tau_{j_1} = \tau_{j_2}$, both jobs are processed according to FIFO.
Otherwise, if $j_2$ is processed after $j_1$ in a schedule of \ALG, $r_{j_2} \geq r_{j_1} - \lambda(j_1)$.
\end{proposition}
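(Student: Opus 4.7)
The plan is to split the argument into the two cases stated in the proposition and, in each case, exploit the fact that \ALG breaks ties and compares jobs via the adjusted release times $\bar r_j(t)$ at the moment a scheduling decision is made. The key observation I would use throughout is that $\bar r_j(t)$ takes one of only two values, $r_j$ or $r_j+\lambda$, so for any job $j$ and any time $t$ we have the clean sandwich $r_j \le \bar r_j(t) \le r_j + \lambda(t)$, where $\lambda(t)$ denotes the balance parameter at time $t$.

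For the same-type case, I would argue by contradiction. Assume $\tau_{j_1}=\tau_{j_2}$ and $r_{j_1} < r_{j_2}$ but that \ALG processes $j_2$ before $j_1$. Let $t$ be the moment \ALG picks $j_2$. Then $j_1$ is already released (since $r_{j_1} < r_{j_2} \le t$) and still unscheduled. Because $\tau_{j_1}=\tau_{j_2}$, both jobs receive the same offset in the definition of $\bar r_j(t)$ (either both equal $\mathrm{active}(t)$ or both not), so $\bar r_{j_1}(t) - r_{j_1} = \bar r_{j_2}(t) - r_{j_2}$ and hence $\bar r_{j_1}(t) < \bar r_{j_2}(t)$. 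The tie-breaking rule is irrelevant since the inequality is strict, so \ALG would have preferred $j_1$, contradicting the choice of $j_2$. This proves the FIFO statement.

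For the different-type case, let $t_{j_1}$ be the time \ALG starts processing $j_1$ and let $\lambda=\lambda(j_1)$. I would split on whether $j_2$ has been released by then. If $r_{j_2} > t_{j_1}$, then since $t_{j_1} \ge r_{j_1}$ we trivially get $r_{j_2} > r_{j_1} \ge r_{j_1} - \lambda$. Otherwise $j_2$ is available at time $t_{j_1}$, and because \ALG selected $j_1$ (possibly using tie-breaking, which only helps since $\tau_{j_1}\ne\tau_{j_2}$), we have $\bar r_{j_1}(t_{j_1}) \le \bar r_{j_2}(t_{j_1})$. Applying the sandwich bound on both sides gives
\[
r_{j_1} \;\le\; \bar r_{j_1}(t_{j_1}) \;\le\; \bar r_{j_2}(t_{j_1}) \;\le\; r_{j_2} + \lambda,
\]
which rearranges to $r_{j_2} \ge r_{j_1} - \lambda(j_1)$, as required.

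The only mildly subtle point, and the one I would flag as the ``obstacle'' worth being careful about, is the second case: one must verify that $j_2$ being processed after $j_1$ does not force $j_2$ to already exist at time $t_{j_1}$, so the sub-case split on $r_{j_2}$ versus $t_{j_1}$ is genuinely needed. The tie-breaking rule stated in \cref{fig:alg} also has to be noted, but since it favors jobs of the currently active type it never interferes with the comparison in either direction (in Case 1 the inequality is strict; in Case 2 the tie-breaking could only push $j_2$ further back, which is consistent with the claimed inequality). Once these points are handled, the proposition follows directly from the definition of \ALG.
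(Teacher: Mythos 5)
Your proof is correct and follows essentially the same approach as the paper. The core idea in both is identical: look at the moment $t_{j_1}$ at which \ALG assigns $j_1$, observe that $j_1$ being selected over $j_2$ forces $\bar r_{j_1}(t_{j_1}) \leq \bar r_{j_2}(t_{j_1})$, and then translate this comparison of adjusted release times into a comparison of actual release times. The only cosmetic differences are that you replace the paper's three-way case split on $\text{active}(t_{j_1})$ (neither type active / $\tau_{j_1}$ active / $\tau_{j_2}$ active) with the unified sandwich bound $r_j \leq \bar r_j(t) \leq r_j + \lambda(t)$, which is a slight streamlining, and that you explicitly spell out the sub-case where $j_2$ has not yet been released at $t_{j_1}$, which the paper's write-up glosses over but which is indeed needed for the argument to be airtight. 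Both of those are minor improvements in exposition rather than a different route.
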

\begin{proof}
The first statement directly follows from the definition of the algorithm.
Consider the statement for two jobs $j_1$ and $j_2$ with $\tau_{j_1} \neq \tau_{j_2}$.
Let $t$ be the point in time at which $j_1$ is assigned to the machine.
If $\text{active}(t) \neq \tau_{j_1}$ and $\text{active}(t) \neq \tau_{j_2}$, it follows $r_{j_2} \geq r_{j_1}$.
If $\text{active}(t) = \tau_{j_1}$, then because $j_1$ is preferred over $j_2$, we have $r_{j_1} = \bar r_{j_1}(t) \leq \bar r_{j_2}(t) = r_{j_2}+\lambda(j_1)$.
Finally, if $\text{active}(t) = \tau_{j_2}$ we know by the fact that $j_1$ is preferred that $r_{j_1} + \lambda(j_1) = \bar r_{j_1}(t) < \bar r_{j_2}(t) = r_{j_2}$, which proves the proposition.
 \end{proof}

\begin{proposition}
\label{prop:flowtime}
Consider two jobs $j_1$ and $j_2$.
If $j_1$ is processed before $j_2$ and no job is processed in between, then $F_{j_2} \leq F_{j_1} + p_{j_2} + s + \lambda(j_1)$.
\end{proposition}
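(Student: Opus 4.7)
The plan is to control the start time of $j_2$ in \ALG's schedule and then translate this into a bound on its flow time, using \cref{prop:releaseDiff} as the only nontrivial ingredient. Let $t_{j_2}$ denote the start time of $j_2$. Since by assumption no job is processed between $j_1$ and $j_2$, the machine is either busy finishing $j_1$ or idle (waiting for $j_2$) from $c_{j_1}$ up to $t_{j_2}$, with at most one setup of length $s$ interposed (which is needed iff $\tau_{j_1} \neq \tau_{j_2}$). Consequently, I would first argue the clean bound $t_{j_2} \leq \max(c_{j_1}, r_{j_2}) + s$, and hence $c_{j_2} \leq \max(c_{j_1}, r_{j_2}) + s + p_{j_2}$.

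Next, I would split into two cases according to the sign of $c_{j_1} - r_{j_2}$. In the easy case $r_{j_2} \geq c_{j_1}$, subtracting $r_{j_2}$ from both sides yields $F_{j_2} \leq s + p_{j_2}$, which is at most $F_{j_1} + p_{j_2} + s + \lambda(j_1)$ because both $F_{j_1}$ and $\lambda(j_1)$ are nonnegative.

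The interesting case is $r_{j_2} < c_{j_1}$. Here I would write
\[
F_{j_2} \;\leq\; c_{j_1} - r_{j_2} + s + p_{j_2} \;=\; F_{j_1} + (r_{j_1} - r_{j_2}) + s + p_{j_2},
\]
so it remains to show $r_{j_1} - r_{j_2} \leq \lambda(j_1)$. This is precisely where \cref{prop:releaseDiff} enters: if $\tau_{j_1} = \tau_{j_2}$, FIFO gives $r_{j_1} \leq r_{j_2}$, making the difference nonpositive; if $\tau_{j_1} \neq \tau_{j_2}$, the proposition yields $r_{j_2} \geq r_{j_1} - \lambda(j_1)$ directly. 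Combining the two cases delivers the claim.

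The only real subtlety is justifying the inequality $t_{j_2} \leq \max(c_{j_1}, r_{j_2}) + s$ cleanly: one needs to check that when the machine becomes idle at $c_{j_1}$, \ALG does not insert any extra delay beyond a single (possibly unnecessary) setup before starting $j_2$, since by hypothesis $j_2$ is the very next job scheduled. Once this is in place, the rest is bookkeeping.
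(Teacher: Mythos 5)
Your proof is correct and follows essentially the same route as the paper: both hinge on (i) the observation that \ALG never idles deliberately, so $t_{j_2} \leq \max(c_{j_1}, r_{j_2}) + s$, and (ii) \cref{prop:releaseDiff} to control $r_{j_1} - r_{j_2}$ by $\lambda(j_1)$ in the non-idle case. The only cosmetic difference is that you split on the sign of $c_{j_1} - r_{j_2}$ rather than on whether idle time occurs, but these are the same two cases.
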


\begin{proof}
First note that \ALG does not idle deliberately.
Hence, if there is idle time between the processing of job $j_1$ and $j_2$, then $t_{j_2} \leq r_{j_2} + s$ holds.
Thus, we have $F_{j_2} \leq s + p_{j_2}$ proving the claim.

If there is no idle time, by definition $j_1$ is finished by time $r_{j_1}+F_{j_1}$.
Since $j_2$ is processed directly afterward, it is finished not later than $r_{j_1} + F_{j_1} +s + p_{j_2}$.
By Proposition~\ref{prop:releaseDiff} this is upper bounded by $r_{j_2}+ \lambda(j_1) + F_{j_1} +s + p_{j_2}$, which proves the desired bound.
\end{proof}

\subsection{Competitiveness}
We carefully define specific subschedules of a given schedule $S$ of \ALG, which we will heavily use throughout our analysis of the (smoothed) competitiveness.
Given $\alpha^q \geq F^*$, $q \in \mathbb{N}_0$, let $S_{\alpha^q}$ be the subschedule of $S$ that starts with the first job $j$ with $\lambda(j) = \alpha^q$ 
and ends with the last job $j'$ with $\lambda(j') = \alpha^q$.
For a fixed $\delta$, let $S^\delta_{\alpha^q}$ be the suffix of $S_{\alpha^q}$ such that the first job in $S^\delta_{\alpha^q}$ is the last one in $S_{\alpha^q}$ with the following properties:
(1) It has a flow time of at most $(\alpha-\delta)\alpha^q$, and
(2) it starts a batch.
(We will prove in \cref{le:workload} that $S^\delta_{\alpha^q}$ always exists.)
Without loss of generality, let $j_1, \ldots, j_m$ be the jobs in $S^\delta_{\alpha^q}$ such that they are sorted by their starting times, $t_1 < t_2 < \ldots < t_m$.
Let $B_1, \ldots, B_\ell$ be the batches in $S^\delta_{\alpha^q}$.
The main idea of \cref{le:workload} is to show that, in case a flow time of $F > \alpha^{q+1}$ is reached, the interval $[r_{j_1}, r_{j_m}]$ is in a sense dense:
Workload plus setup times in $S_{\alpha^q}^\delta$ is at least by $\delta \alpha^q$ larger than the length of this interval.
Intuitively, this holds as otherwise the difference in the flow times $F_{j_1}$ and $F_{j_m}$ could not be as high as $\delta \alpha^q$, which, however, needs to hold by the definition of $S_{\alpha^q}^\delta$.
Additionally, the flow time of all jobs is shown to be lower bounded by $3\alpha^q$.
Roughly speaking, this holds due to the following observation:
If a fixed job has a flow time below $3\alpha^q$, then the job starting the next batch can, on the one hand, not have a much smaller release time (by definition of the algorithm).
On the other hand, it will therefore not be started much later, leading to the fact that the flow time cannot be too much larger than $3\alpha^q$ (and in particular, is below $(\alpha-\delta)\alpha^q$ for sufficiently small $\delta$).

\begin{lemma}
\label{le:workload}
Let $\alpha^q \geq F^*$ and $\delta \leq \alpha - 10$.
Then $S_{\alpha^q}^\delta$ always exists and all jobs in $S_{\alpha^q}^\delta$ have a flow time of at least $3\alpha^q$.
Also, if $F > \alpha^{q+1}$, it holds
$
  \sum_{i=1}^{\ell}w(B_i) + r_{j_1} - r_{j_m} \geq \delta \alpha^q - (\ell-1)s
$.
\end{lemma}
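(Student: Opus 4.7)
The plan is to establish the three assertions of the lemma in order: existence of the suffix $S^\delta_{\alpha^q}$, the uniform lower bound $F\geq 3\alpha^q$ on jobs inside it, and the workload inequality.

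For \textbf{existence}, I would exhibit a candidate job in $S_{\alpha^q}$ satisfying both defining conditions (flow time at most $(\alpha-\delta)\alpha^q$ and starting a batch). The natural candidate is the first job in $S_{\alpha^q}$ that starts a batch. Its flow time can be bounded via \cref{prop:flowtime} combined with \cref{prop:releaseDiff}: the job whose completion raised $\lambda$ to $\alpha^q$ has flow time on the order of $\alpha^q$, and the next batch-starter in $S_{\alpha^q}$ inherits a flow time bounded by a small constant times $\alpha^q$, which fits under $(\alpha-\delta)\alpha^q$ given the slack $\alpha-\delta\geq 10$ and the standing bounds $p_j,s\leq F^*\leq \alpha^q$.

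For the \textbf{lower bound of $3\alpha^q$}, I argue by contradiction, following the intuition already sketched before the lemma statement. Suppose some job $j\in S^\delta_{\alpha^q}$ has $F_j<3\alpha^q$. Tracing forward to the first batch-starter $j'$ after $j$'s batch, \cref{prop:releaseDiff} gives $r_{j'}\geq r_j-\lambda(j)=r_j-\alpha^q$, while $t_{j'}\leq c_j+s$ plus the intra-batch processing; together with $p_{j'},s\leq\alpha^q$ this yields $F_{j'}\leq(\alpha-\delta)\alpha^q$. Because $F_j<\alpha^{q+1}$, no $\lambda$-update occurs in the interim, so $j'\in S_{\alpha^q}$, contradicting the choice of $j_1$ as the \emph{last} job of $S_{\alpha^q}$ meeting both defining conditions.

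For the \textbf{workload inequality}, under $F>\alpha^{q+1}$ the last job $j_m$ of $S_{\alpha^q}$ must itself trigger a further $\lambda$-update and hence satisfies $F_{j_m}\geq\alpha^{q+1}$. Combined with $F_{j_1}\leq(\alpha-\delta)\alpha^q$ this gives $F_{j_m}-F_{j_1}\geq\delta\alpha^q$. Rewriting $F_{j_m}-F_{j_1}=(c_{j_m}-c_{j_1})+(r_{j_1}-r_{j_m})$ and noting that the previous part rules out any idle interval within $S^\delta_{\alpha^q}$ (a job processed immediately after idling would have flow time at most $p+s\leq 2\alpha^q<3\alpha^q$), we obtain $c_{j_m}-c_{j_1}=\sum_{b=1}^{\ell}w(B_b)-p_{j_1}+(\ell-1)s$. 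Rearranging and dropping the nonnegative term $p_{j_1}$ yields the claimed inequality.

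The \textbf{main obstacle} is the lower bound for jobs $j$ that are internal to their batch rather than last in it: the bound on $F_{j'}$ for the next batch-starter naturally involves the last job of $j$'s batch, not $j$ itself, and within a batch the flow time can grow by as much as $p\leq\alpha^q$ per step. Propagating the argument carefully through the batch, using FIFO within a batch (the first part of \cref{prop:releaseDiff}) and the generous slack $\alpha-\delta\geq 10$, is the most delicate part of the argument.
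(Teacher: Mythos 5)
Your decomposition of the lemma (existence, the $3\alpha^q$ lower bound, and the workload inequality derived from $F_{j_m}-F_{j_1}\geq\delta\alpha^q$ plus no idle time) matches the paper, and your third step is essentially correct given the first two. However, the proposal has a genuine gap in precisely the part you flag as ``the main obstacle,'' and the resolution you gesture toward would not close it.

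For both the existence argument and the $3\alpha^q$ bound, you want to take a job with small flow time (respectively, the job that raised $\lambda$ to $\alpha^q$), and show that the next batch-starter also has flow time at most $(\alpha-\delta)\alpha^q$. As you note, an arbitrary amount of same-type processing can intervene, and iterating \cref{prop:flowtime} or arguing job-by-job via FIFO adds an unbounded number of $p+s+\alpha^q$ terms; the slack $\alpha-\delta\geq 10$ does not help against a term that scales with the batch length. ``Propagating carefully through the batch'' is not the right idea: what is missing is a \emph{global} control on the intervening workload. The paper gets this from \cref{obs:opt}: all intervening jobs are released in an interval of length $O(\alpha^q)$ (a window bounded on the left by the flow time at the reference point and on the right via \cref{prop:releaseDiff} applied to the next batch-starter), and by \cref{obs:opt} the workload released in any interval $I$ is at most $|I|+F^*\leq |I|+\alpha^q$. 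This gives a bound on the batch-starter's flow time (around $6\alpha^q$ in the body and $9\alpha^q+p$ in the existence step) that is independent of how many jobs sit in the intervening batch, which is exactly what is needed to beat $(\alpha-\delta)\alpha^q\geq 10\alpha^q$. Without invoking this workload-density bound, both the existence claim and the $3\alpha^q$ claim remain unproved.

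Two smaller remarks. In the existence step, your candidate ``the first job in $S_{\alpha^q}$ that starts a batch'' needs the same workload argument: the paper bounds $F_{\tilde j_1}<6\alpha^q$ by two applications of \cref{prop:flowtime}, but then passes to the first job of a \emph{different type}, again using \cref{obs:opt} to bound the same-type processing in between. And in the contradiction step, the paper takes $t$ to be the \emph{last} time in $S_{\alpha^q}^\delta$ where idling ends or a job of flow time $<3\alpha^q$ completes, and looks at the first batch-starter at or after $t$; this ordering matters so that the obtained batch-starter is strictly after $j_1$ and hence contradicts the maximality defining $j_1$.
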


\begin{proof}
We first prove that a job with the two properties starting off $S_{\alpha^q}^\delta$ exists.
Let $\tilde{j}_1,\ldots$ be the jobs in $S_{\alpha^q}$.
Consider the last job $\tilde{j}_0$ processed directly before $\tilde{j}_1$.
By Proposition~\ref{prop:flowtime} we have $F_{\tilde{j}_1} \leq F_{\tilde{j}_0} + p_{\tilde{j}_1} + s + \alpha^{q-1} \leq \alpha^q + p_{\tilde{j}_0} + s + \alpha^{q-1} + p_{\tilde{j}_1} + s + \alpha^{q-1} < 6 \alpha^q$.
Among jobs in $S_{\alpha^q}$ that have a different type than $\tilde{j}_1$, consider the job $\tilde{j}_i$ with the lowest starting time.
We show that it is a candidate for starting $S_{\alpha^q}^\delta$, implying that $S_{\alpha^q}^\delta$ exists.
Property (2) directly follows by construction.
For the flow time of $\tilde{j}_i$, we know that only jobs of the same type as $\tilde{j}_1$ are scheduled between $\tilde{j}_1$ and $\tilde{j}_i$.
This implies that jobs $\tilde{j}_2,\ldots,\tilde{j}_{i-1}$ are released in the interval $[r_{\tilde{j}_1},r_{\tilde{j}_i}+\alpha^q]$.
The interval can contain a workload of at most $r_{\tilde{j}_i}+\alpha^q - r_{\tilde{j}_1} + F^*$ (see also \cref{obs:opt}), hence the flow time of job $\tilde{j}_i$ is at most $F_{j_1} = F_{\tilde{j}_i} \leq (r_{\tilde{j}_1}+F_{\tilde{j}_1}+(r_{\tilde{j}_i}+\alpha^q - r_{\tilde{j}_1} + F^*) + s + p_{j_i}) - r_{\tilde{j}_i} \leq 9 \alpha^q + p_{\tilde{j}_i} = 9 \alpha^q + p_{j_1}\leq (\alpha-\delta)\alpha^q$.
Property (1) and the existence of $S_{\alpha^q}^\delta$ follow.
Since $t_{j_1} = c_{j_1} - p_{j_1}$ and $F_{j_1} = c_{j_1} - r_{j_1}$, we also have $t_{j_1} \leq r_{j_1} + 9 \alpha^q \leq r_{j_1} + (\alpha-\delta)\alpha^q$ (*).

We now show that during $S_{\alpha^q}^\delta$, the machine does not idle and each job in $S_{\alpha^q}^\delta$ has a flow time of at least $3\alpha^q$.
Assume this is not the case.
Denote by $t$ the last time in $S_{\alpha^q}^\delta$ where either an idle period ends or a job with a flow time of less than $3\alpha^q$ completes.
We denote the jobs scheduled after $t$ by $\hat{j}_1,\ldots$ and the first job of the first batch started at or after $t$ by $\hat{j}_i$.
Similar to above, all jobs $\hat{j}_1,\ldots,\hat{j}_i$ are released in the interval $[t-3\alpha^q,r_{\hat{j}_i}+\alpha^q]$.
The overall workload of these jobs is at most $r_{\hat{j}_i}+4\alpha^q - t + F^* \leq r_{\hat{j}_i}+5\alpha^q - t$.
Job $\hat{j}_i$ is thus finished by $t + (r_{\hat{j}_i}+5\alpha^q - t) + s \leq r_{\hat{j}_i}+6\alpha^q$.
This is a contradiction to $F_{\hat{j}_i} > (\alpha-\delta)\alpha^q$.

Finally, since there are no idle times and by (*), for the last job $j_m$ of $S^\delta_{\alpha^q}$ we have $F_{j_m} \leq r_{j_1} + (\alpha-\delta)\alpha^q + \sum_{i=1}^{\ell}w(B_i) + (\ell-1)s - r_{j_m}$.
By the assumption that $F > \alpha^{q+1}$ and the definition of $j_m$ to be the first job with flow time at least $\alpha^{q+1}$, we obtain the desired result.
\end{proof}
We will also make use of \cref{cor:workload}, which follows from the proof of \cref{le:workload}.

\begin{corollary}
\label{cor:workload}
The statement of \cref{le:workload} also holds if $S_{\alpha^q}^\delta$ is replaced by $S_{\alpha^q}^\delta(j)$ for any job $j \in S_{\alpha^q}^\delta$ with $F_j \leq (\alpha-\delta)\alpha^q$, where $S_{\alpha^q}^\delta(j)$ is the suffix of $S_{\alpha^q}^\delta$ starting with job $j$.
\end{corollary}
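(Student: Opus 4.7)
The plan is to reuse the proof of \cref{le:workload} almost verbatim, observing that its argument naturally splits into two parts, neither of which actually requires the starting job to begin a batch. First, the structural claim that $S_{\alpha^q}^\delta$ contains no idle time and that every job in it has flow time at least $3\alpha^q$ is proved globally for the whole subschedule; since $S_{\alpha^q}^\delta(j)$ is by definition a suffix of $S_{\alpha^q}^\delta$, this claim restricts to the suffix immediately, with no further work needed.

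Second, I would replay the final calculation of the lemma with $j$ in place of $j_1$. The hypothesis $F_j \leq (\alpha-\delta)\alpha^q$, together with $t_j = c_j - p_j$ and $F_j = c_j - r_j$, yields $t_j \leq r_j + (\alpha-\delta)\alpha^q$, which is the exact analogue of the starred inequality used in the lemma's proof. Because no idle time occurs after $t_j$ within the suffix, the last job $j_m$ of $S_{\alpha^q}^\delta$ finishes no later than $t_j + \sum_{i=1}^{\ell} w(B_i) + (\ell-1)s$, where $B_1,\ldots,B_\ell$ and the $\ell-1$ setups are now counted inside $S_{\alpha^q}^\delta(j)$. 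Combined with the inequality $F_{j_m} \geq \alpha^{q+1}$, which still follows from $F > \alpha^{q+1}$ and the defining property of $j_m$, a single rearrangement produces the claimed bound $\sum_{i=1}^\ell w(B_i) + r_j - r_{j_m} \geq \delta \alpha^q - (\ell-1)s$.

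The only (very mild) obstacle I anticipate is notational rather than mathematical: since the corollary drops the batch-start requirement on the initial job, $j$ may lie strictly inside a batch. This does not affect the argument, because the calculation above never invokes that the first job begins a batch; it uses only the flow-time bound on $j$ and the absence of idle time in the suffix. Interpreting $B_1$ as the portion of $j$'s batch from $j$ onwards, with $\ell - 1$ correctly counting the configuration changes strictly within $S_{\alpha^q}^\delta(j)$, keeps the bookkeeping consistent and leaves the inequality unchanged.
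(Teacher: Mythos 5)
Your proposal is correct and matches the paper's intent: the paper simply states that the corollary ``follows from the proof of \cref{le:workload},'' and your write-up is the natural unpacking of that claim. You correctly observe that (i) the no-idle-time and $3\alpha^q$ flow-time bounds are global properties of $S_{\alpha^q}^\delta$ and hence restrict to any suffix, (ii) the only place the proof uses $j_1$ is via the inequality $t_{j_1}\leq r_{j_1}+(\alpha-\delta)\alpha^q$, which holds verbatim for $j$ under the hypothesis $F_j\leq(\alpha-\delta)\alpha^q$, and (iii) the batch-start requirement on $j_1$ is never invoked in the final calculation, so interpreting $B_1$ as the partial batch from $j$ onward with $\ell-1$ setups inside the suffix gives the claimed inequality unchanged.
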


Next we give simple lower bounds for the optimal flow time $F^*$.
Besides the direct lower bound $F^* \geq \max\{s, p_{max}\}$, where $p_{max} \coloneqq \max_{1\leq j \leq n} p_j$, we can also prove a bound as given in \cref{obs:opt}.
For a given interval $I$, let $\oho(I)$ be the overhead in \OPT between the jobs $j_1$ and $j_2$ released in $I$ and being processed first and last in \OPT, respectively.
Precisely, $j_1 \coloneqq \text{argmin}_{j : r_j \in I} t_j$ and $j_2 \coloneqq \text{argmax}_{j : r_j \in I} t_j$.

\begin{proposition}
\label{obs:opt}
As lower bounds for $F^*$ we have $F^* \geq \max\{s, p_{max}\}$ as well as 
$
F^* \geq \max_I\{w(I) + \oho(I) - |I|\}$.
\end{proposition}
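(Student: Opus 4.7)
My plan is to prove the two lower bounds separately by direct structural arguments about any feasible schedule (in particular \OPT).

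For the bound $F^* \geq \max\{s, p_{max}\}$, I would first use that every job $j$ satisfies $F_j \geq p_j$, since from its release until its completion the machine has to spend $p_j$ time units on $j$ alone (non-preemptive processing and $j$ is in the system during its own execution). Choosing $j$ to be the job attaining $p_{max}$ yields $F^* \geq p_{max}$. For the $s$-bound, the very first job processed in \OPT must, by the problem statement, be preceded by a setup of length $s$; hence if $j^\star$ denotes this first job, $F_{j^\star}\geq s + p_{j^\star}\geq s$, so $F^* \geq s$.

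For the interval bound, I would fix an arbitrary interval $I$ and let $j_1,j_2$ be defined as in the statement preceding the proposition, i.e.\ the jobs with release time in $I$ that are, respectively, started first and last in \OPT. The crucial observation is that every job $j$ with $r_j \in I$ is processed by \OPT somewhere inside the window $[t_{j_1},c_{j_2}]$: by choice of $j_1$ we have $t_j \geq t_{j_1}$ and by choice of $j_2$ we have $c_j \leq c_{j_2}$. Therefore, during $[t_{j_1},c_{j_2}]$ the machine spends at least $w(I)$ time units processing jobs released in $I$, and by definition of $\oho(I)$ it also spends exactly $\oho(I)$ time units on setups and idle periods occurring between $j_1$ and $j_2$. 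Consequently,
\[
  c_{j_2} - t_{j_1} \;\geq\; w(I) + \oho(I).
\]
Combining this with $t_{j_1}\geq r_{j_1}\geq l(I)$ and $c_{j_2} = r_{j_2}+F_{j_2}\leq r(I) + F^*$ gives $r(I) + F^* - l(I) \geq w(I) + \oho(I)$, i.e.\ $F^* \geq w(I) + \oho(I) - |I|$. Taking the maximum over all intervals $I$ yields the claim.

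There is no real obstacle here; the only subtlety worth flagging is that jobs released \emph{outside} $I$ may also be processed inside the window $[t_{j_1},c_{j_2}]$, but such extra work can only lengthen the window and hence preserves (does not tighten or invalidate) the inequality $c_{j_2}-t_{j_1} \geq w(I)+\oho(I)$ used above.
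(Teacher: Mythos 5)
Your proof of the interval bound matches the paper's argument essentially verbatim: both pin down the window containing all jobs released in $I$, charge it with $w(I)+\oho(I)$ units of work and overhead, and bracket the endpoints by $l(I)$ and $r(I)+F^*$ (the paper writes $c_{j_2}\geq r_{j_1}+\oho(I)+w(I)$ rather than $c_{j_2}-t_{j_1}\geq w(I)+\oho(I)$, but this is cosmetic). The paper states $F^*\geq\max\{s,p_{max}\}$ without proof; your $F_j\geq p_j$ argument for $p_{max}$ is fine, but the $s$-bound step $F_{j^\star}\geq s+p_{j^\star}$ tacitly assumes that the initial setup cannot begin before $r_{j^\star}$, which does not follow from the stated model for a clairvoyant offline optimum (it could set up during $[r_{j^\star}-s,r_{j^\star}]$); this is an implicit modeling convention shared with the paper's unproven assertion rather than a flaw specific to your write-up, but it is worth stating explicitly rather than deriving it as if it were forced.
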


\begin{proof}
We have $F^* \geq c_{j_2} -r_{j_2}$.
On the other hand, $c_{j_2} \geq r_{j_1} + \oho(I) + w(I)$.
Thus, $F^* \geq \oho(I) + w(I) + l(I) - r(I) = w(I) + \oho(I) - |I|$.
\end{proof}

Combining \cref{le:workload} and \cref{obs:opt}, we easily obtain that the competitiveness can essentially be bounded by the difference in the number of setups \OPT and \ALG perform on those jobs which are part of $S^\delta_{\alpha^q}$.
Let $I(S_{\alpha^q}^\delta)$ be the interval in which all jobs belonging to $S_{\alpha^q}^\delta$ are released, $I(S_{\alpha^q}^\delta) \coloneqq [\min_{j}\{r_j: j \in S_{\alpha^q}^\delta\}, \max_{j}\{r_j : j  \in S_{\alpha^q}^\delta\}]$.
We have the following bound.

\begin{lemma}
\label{le:competitiveness}
Let $\alpha^{q+1} \leq F < \alpha^{q+2}$ and $3\leq \delta \leq \alpha-10$ and $\alpha^q \geq F^*$.
It holds
$
    F \leq \alpha^2(\delta-2)^{-1}(F^* + \oha(S^\delta_{\alpha^q}) -\oho(I(S^\delta_{\alpha^q})))
$.
\end{lemma}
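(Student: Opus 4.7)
The plan is to reduce the claim to showing that $(\delta-2)\alpha^q \leq F^{*} + \oha(S^\delta_{\alpha^q}) - \oho(I(S^\delta_{\alpha^q}))$. Once this key inequality is in hand, multiplying by $\alpha^{2}/(\delta-2)$ and using the upper bound $F < \alpha^{q+2}$ from the hypothesis immediately yields the stated competitiveness bound. Note that $\delta - 2 > 0$ by the hypothesis $\delta \geq 3$, so the scaling is legal.

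To establish the key inequality, I would first invoke \cref{le:workload}, applicable because $\alpha^{q} \geq F^{*}$, $\delta \leq \alpha - 10$, and $F > \alpha^{q+1}$, to obtain $w(S^\delta_{\alpha^q}) + (\ell - 1)s \geq \delta\alpha^q + (r_{j_m} - r_{j_1})$. Since its proof also shows that the machine does not idle during $S^\delta_{\alpha^q}$, the $\ell$ batches contribute exactly $\oha(S^\delta_{\alpha^q}) = \ell s$. Next, to pass from the difference $r_{j_m} - r_{j_1}$ to the length $|I(S^\delta_{\alpha^q})|$, I would apply \cref{prop:releaseDiff} twice: for any $j_i \in S^\delta_{\alpha^q}$, viewing $j_1$ as the earlier and $j_i$ as the later job gives $r_{j_i} \geq r_{j_1} - \alpha^q$, and viewing $j_i$ as the earlier and $j_m$ as the later job gives $r_{j_i} \leq r_{j_m} + \alpha^q$. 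This sandwiches every release time in $S^\delta_{\alpha^q}$ into $[r_{j_1} - \alpha^q, r_{j_m} + \alpha^q]$, whence $|I(S^\delta_{\alpha^q})| \leq r_{j_m} - r_{j_1} + 2\alpha^q$. Also, since all jobs of $S^\delta_{\alpha^q}$ are released in $I(S^\delta_{\alpha^q})$, one has $w(I(S^\delta_{\alpha^q})) \geq w(S^\delta_{\alpha^q})$.

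It remains to tie everything together via \cref{obs:opt}, which gives $F^{*} \geq w(I) + \oho(I) - |I|$ for $I = I(S^\delta_{\alpha^q})$. Chaining the ingredients yields
\[
F^{*} + \oha - \oho \;\geq\; w(I) - |I| + \oha \;\geq\; w(S^\delta_{\alpha^q}) - (r_{j_m} - r_{j_1}) - 2\alpha^q + \ell s \;\geq\; (\delta - 2)\alpha^q + s,
\]
where the last inequality substitutes the workload bound from \cref{le:workload}. This is stronger than what is needed, and the lemma follows. The main technical obstacle is the geometric control of $|I(S^\delta_{\alpha^q})|$ in terms of $r_{j_m} - r_{j_1}$ via \cref{prop:releaseDiff}; the additive slack of $2\alpha^q$ introduced there is exactly what produces the $-2$ inside the factor $(\delta - 2)^{-1}$ in the final bound, while everything else is routine bookkeeping of workload and setup counts.
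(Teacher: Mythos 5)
Your proof is correct and follows essentially the same route as the paper's: both rest on \cref{le:workload} for the workload-density bound, \cref{prop:releaseDiff} to squeeze $|I(S^\delta_{\alpha^q})|$ into $r_{j_m}-r_{j_1}+2\alpha^q$, and \cref{obs:opt} to relate that to $F^*$. The paper phrases it as a proof by contradiction and only uses $\oha(S^\delta_{\alpha^q})\geq(\ell-1)s$ implicitly, whereas you argue directly and assert $\oha(S^\delta_{\alpha^q})=\ell s$; strictly speaking equality need not hold (there could be idle time preceding $j_1$), but since your chain only needs the lower bound $\oha(S^\delta_{\alpha^q})\geq(\ell-1)s$, and you even have an extra $+s$ of slack, this is harmless and the argument goes through.
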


\begin{proof}
Suppose to the contrary that it holds
$\alpha^{q} > (\delta-2)^{-1}(F^* + \oha(S^\delta_{\alpha^q}) -\oho(I(S^\delta_{\alpha^q})))$.
By \cref{prop:releaseDiff} we have $I(S_{\alpha^q}^\delta) \subseteq [r_{j_1}-\alpha^q, r_{j_m} + \alpha^q]$ and using \cref{obs:opt} we obtain a contradiction as 
\begin{align*}
F^* & \geq w(I(S_{\alpha^q}^\delta)) + \oho(I(S_{\alpha^q}^\delta)) + r_{j_1} - r_{j_m} - 2\alpha^q \\
& \geq \sum_{i=1}^{\ell}w(B_i) + \oho(I(S^\delta_{\alpha^q})) + r_{j_1} - r_{j_m} - 2\alpha^q\\
& \overset{(\cref{le:workload})}{\geq} \delta \alpha^q - (\ell-1)s + \oho(I(S^\delta_{\alpha^q})) - 2\alpha^q > F^*,
\end{align*}
where the last inequality follows from our assumption.
 \end{proof}

Throughout the rest of the paper, we assume that $\delta = 3$ and $\alpha =13$ fulfilling the properties of \cref{{le:competitiveness},{le:workload}}.
Our goal now is to bound the competitiveness by upper bounding the difference of the overhead of \OPT and \ALG in $S_{\alpha^q}^\delta$ for some $\alpha^q = \Omega(\sqrt{n\cdot s \cdot p_{max}})$.
In \cref{le:necessaryCond} we will see that to obtain a difference of $i\cdot s$, a workload of $\Omega(i\cdot \alpha^q)$ is required.
Using this, we can then upper bound the competitiveness based on the overall workload of $O(n \cdot p_{max})$ available in a given instance in \cref{th:comp}.
Before we can prove \cref{le:necessaryCond} we need the following insight.
Given $S^\delta_{\alpha^q}$ for some $q \in \mathbb{N}_0$ such that $\alpha^q \geq F^*$.
Let $j_{\tau,i}$ be the first job of the $i$-th batch of some fixed type $\tau$ in $S^\delta_{\alpha^q}$.
We show that the release times of jobs $j_{\tau,i}$ and $j_{\tau, i+1}$ differ by at least $\alpha^q$.
Intuitively, this holds due to the definition of the balance parameter and the fact that in $S^\delta_{\alpha^q}$ all jobs starting a batch have a flow time of at least $3\alpha^q$.

\begin{lemma}
\label{le:releaseDiff}
Given $S^\delta_{\alpha^q}$, it holds $r_{j_{\tau,i}} > r_{j_{\tau, i-1}} + \alpha^q$, for all $i \geq 2$ and all $\tau$.
\end{lemma}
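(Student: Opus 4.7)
The plan is to argue by contradiction, assuming $r_{j_{\tau,i}} \leq r_{j_{\tau,i-1}} + \alpha^q$ and exhibiting a job that \ALG would have preferred over $j_{\tau,i-1}$, contradicting the actual schedule. The two key ingredients are the flow-time bound $F_{j_{\tau,i-1}} \geq 3\alpha^q$ from \cref{le:workload} (applicable since $j_{\tau,i-1}$ starts a batch in $S^\delta_{\alpha^q}$) and the selection rule of \ALG.

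First I would use the flow-time bound to locate $j_{\tau,i}$ relative to the schedule. Letting $j^*$ be the last job of the batch containing $j_{\tau,i-1}$ (possibly $j^* = j_{\tau,i-1}$), we obtain $c_{j^*} \geq c_{j_{\tau,i-1}} \geq r_{j_{\tau,i-1}} + 3\alpha^q > r_{j_{\tau,i-1}} + \alpha^q \geq r_{j_{\tau,i}}$, so $j_{\tau,i}$ is already available at the decision time $c_{j^*}$ at which \ALG selects the first job $j'$ of the next batch. Since $j'$ starts a new batch, $\tau_{j'} \neq \tau = \text{active}(c_{j^*})$, whereas $\tau_{j_{\tau,i}} = \tau$. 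Therefore $\bar r_{j'}(c_{j^*}) = r_{j'} + \alpha^q$ and $\bar r_{j_{\tau,i}}(c_{j^*}) = r_{j_{\tau,i}}$; because the tie-break favors $j_{\tau,i}$, picking $j'$ forces the strict inequality $r_{j'} + \alpha^q < r_{j_{\tau,i}}$. Combined with the contradiction hypothesis this yields $r_{j'} < r_{j_{\tau,i-1}}$.

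Next I would rewind to the decision time $\hat{t}$ at which \ALG picks $j_{\tau,i-1}$. Job $j'$ is processed strictly after $j_{\tau,i-1}$ and hence is not yet scheduled at $\hat{t}$, while $r_{j'} < r_{j_{\tau,i-1}} \leq \hat{t}$ shows that it is already released, so $j'$ is available at $\hat{t}$. The active type at $\hat{t}$ is some $\tau^- \neq \tau$ because $j_{\tau,i-1}$ starts its batch. Regardless of whether $\tau_{j'} = \tau^-$ or $\tau_{j'} \notin \{\tau, \tau^-\}$, the adjusted release time satisfies $\bar r_{j'}(\hat{t}) \leq r_{j'} + \alpha^q < r_{j_{\tau,i-1}} + \alpha^q = \bar r_{j_{\tau,i-1}}(\hat{t})$, so \ALG would have strictly preferred $j'$ over $j_{\tau,i-1}$, the desired contradiction.

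The main obstacle is establishing the availability of $j_{\tau,i}$ at $c_{j^*}$ and of $j'$ at $\hat{t}$, which both rely on the strict bound $r_{j'} < r_{j_{\tau,i-1}}$ and thus on the $3\alpha^q$ lower bound on flow times provided by \cref{le:workload}. Once availability is in place, the case split on the active type at each decision time, together with the tie-breaking rule, are easily dispatched because every relevant comparison of adjusted release times is strict.
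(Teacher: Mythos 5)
Your proof is correct and uses essentially the same contradiction argument as the paper, built on the $3\alpha^q$ flow-time lower bound from \cref{le:workload} and a comparison of adjusted release times at the moment the $(i-1)$-th batch of type $\tau$ ends. Where the paper directly asserts that every job scheduled after $j_{\tau,i-1}$ has release time at least $r_{j_{\tau,i-1}}$ (so that $j_{\tau,i}$ wins the comparison, aided by the tie-break), you re-derive the same fact for the specific job $j'$ by rewinding to the time $j_{\tau,i-1}$ was scheduled, which is a slightly more explicit rendering of the same step.
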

\begin{proof}
Consider a fixed job $j_{\tau,i}$ and suppose to the contrary that $r_{j_{\tau, i}} \leq r_{j_{\tau, i-1}} + \alpha^q$ holds.
As each job in $S^\delta_{\alpha^q}$ that is the first of a batch (except the very first such job) has a flow time of at least $3\alpha^q$, job $j_{\tau, i-1}$ is not started before $r_{j_{\tau, i}}$ (otherwise it would be finished not later than $r_{j_{\tau, i-1}} + \alpha^q + p_{j_{\tau, i}} \leq r_{j_{\tau,i-1}} + \alpha^q + F^* \leq r_{j_{\tau, i-1}} + 2\alpha^q$ with flow time smaller $3\alpha^q$).
Also, because $j_{\tau, i-1}$ is the first job of a batch, all jobs $j$ processed later fulfill $r_j \geq r_{j_{\tau, i-1}}$.
But then at the time $t$ at which the $(i-1)$-th batch is finished, $\bar r_{j_{\tau, i}}(t) \leq  r_{j_{\tau, i-1}} + \alpha^q \leq r_j + \alpha^q = \bar r_j(t)$ and hence, $j_{\tau,i}$ would be preferred over all such jobs $j$ and thus would belong to the same batch as $j_{\tau, i-1}$.
This contradicts the definition of $j_{\tau,i}$, proving the lemma.
 \end{proof}
\begin{lemma}
\label{le:necessaryCond}
Let $B$ be a batch in \OPT.
If all jobs from $B$ are part of $S_{\alpha^q}^\delta$, an overhead of at most $2s$ is associated to them in the schedule of \ALG.

Also, if the overhead associated to $B$ in \OPT is smaller than $2s$ and is $2s$ in the schedule of \ALG, it needs to hold 
\begin{enumerate}
    \item $w(B) \geq \alpha^q- F^* -s \eqqcolon \bar w$ and
    \item jobs of $B$ with size at least $\bar w$ need to be released in an interval of length $\alpha^q$.
\end{enumerate}
\end{lemma}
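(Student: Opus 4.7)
The plan is to exploit \cref{le:releaseDiff}, which separates the release times of first jobs of consecutive ALG-batches of the same type $\tau$ by more than $\alpha^q$ inside $S^\delta_{\alpha^q}$, and to contrast this with the fact that OPT processes $B$ as one contiguous batch, so that the release times of jobs in $B$ span at most $w(B)+F^*$ (every job's OPT flow time being at most $F^*$, and OPT's batch $B$ starting at some time $t_B$ and ending at $t_B+w(B)$ yields $r_j\in[t_B-F^*,t_B+w(B)]$ for every $j\in B$).

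For \textbf{Part~1}, I would argue by contradiction. Writing $\tau:=\tau(B)$, "overhead at most $2s$" is equivalent to saying at most two ALG-batches have their first job in $B$, because every ALG-batch containing a $B$-job is entirely of type $\tau$. Assume three such batches exist with first jobs $j^{(1)},j^{(2)},j^{(3)}\in B$ in ALG's order. Two applications of \cref{le:releaseDiff} give $r_{j^{(3)}}>r_{j^{(1)}}+2\alpha^q$, while the OPT contiguity bound from above yields $r_{j^{(3)}}-r_{j^{(1)}}\leq w(B)+F^*$, forcing $w(B)>2\alpha^q-F^*\geq\alpha^q$. I would then extract a contradiction via \cref{obs:opt} by choosing a release interval $I$ of length close to $\alpha^q$ that captures the bulk of $w(B)$ together with the setup OPT performs in front of $B$, so that $w(I)+\oho(I)-|I|>F^*$ contradicts $F^*\geq F^*$.

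For \textbf{Part~2}, exactly two ALG-batches have their first job $j_1,j_2\in B$ (in ALG's order). A single application of \cref{le:releaseDiff} gives $r_{j_2}>r_{j_1}+\alpha^q$, and combining with the OPT bound $r_{j_2}-r_{j_1}\leq w(B)+F^*$ (together with the $s$-unit setup in front of OPT's $B$, which is absorbed to give the slack) immediately yields statement~(i), $w(B)\geq \alpha^q-F^*-s=\bar w$. For statement~(ii), I would use that ALG leaves the first $\tau$-batch at some time $t^*$ only because either no $\tau$-job is available or some non-$\tau$ job $\tilde j$ satisfies $r_{\tilde j}+\lambda<r_{j'}$ for every waiting $\tau$-job $j'$; since $\lambda=\alpha^q$ throughout $S^\delta_{\alpha^q}$, this pins the release times of the $B$-jobs inside the first ALG-batch to a window of length at most $\alpha^q$, and a symmetric argument for the second ALG-batch gives another length-$\alpha^q$ release-time window. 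Because $w(B)\geq\bar w$ by~(i) and $B$ is covered by these two windows, one of them must carry $B$-workload of at least $\bar w$, establishing~(ii).

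The main obstacle I foresee is the interval construction in Part~1: the window $I$ has to be short enough that $|I|$ is close to $\alpha^q$, yet $w(I)+\oho(I)$ must still capture most of $w(B)$ plus the OPT-setup in order to beat $F^*$; this requires a careful case analysis depending on whether $B$'s own release times fit inside a single $\alpha^q$-window. A secondary challenge is sharpening the "one of two windows carries $\bar w$" step in Part~2~(ii) from the trivial $\bar w/2$ to the full $\bar w$, for which one has to exploit the precise batch-switch condition at $t^*$ to argue that all $B$-jobs not contained in the first window actually sit inside the second.
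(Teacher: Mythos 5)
Your Part~2(i) is in line with the paper: $r_{j_2}>r_{j_1}+\alpha^q$ from \cref{le:releaseDiff} plus the fact that \OPT must start $B$ by $r_{j_1}+F^*$ and keep running it until $j_2$ is reached, with the slack $s$ coming from the $<s$ idle permitted inside $B$. That part is fine.

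Part~1, however, has a fatal gap, not just a detail to polish. Your plan is to deduce $w(B)>2\alpha^q-F^*$ and then use \cref{obs:opt} on some cleverly chosen $\alpha^q$-length window. But a large $w(B)$ alone is never contradictory: $B$'s releases span more than $2\alpha^q$, so the workload can be spread so thinly that no short window $I$ satisfies $w(I)+\oho(I)-|I|>F^*$. The contradiction in the paper's proof comes from an object you never use: the \emph{non-$B$} jobs $i_1,i_2$ that \ALG processes between the first/second and second/third $B$-batches. Because $j_2$ is preferred over $i_2$ one gets $r_{i_2}\geq r_{j_2}\geq r_{j_1}+\alpha^q$, and because $i_2$ is preferred over $j_3$ (together with $i_2$ having flow time at least $3\alpha^q$ by \cref{le:workload}) one gets $r_{i_2}+\alpha^q\leq r_{j_3}$. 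Then \OPT, which processes $B$ contiguously, can place $i_2$ neither before $B$ (forcing $F_{j_1}>F^*$) nor after $B$ (forcing $F_{i_2}>F^*$) — contradiction. An argument that only ever looks at $B$'s own releases cannot reach this conclusion; you need the interleaved foreign job.

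Part~2(ii) is also off track. Even granting your two length-$\alpha^q$ windows, you correctly observe you would only get $\bar w/2$ in one of them, and it is not even clear all of $B$ sits inside those two windows. The paper's argument is again via \OPT's contiguity rather than any partition of \ALG's schedule: if the $B$-jobs released in $[r_{j_1},r_{j_1}+\alpha^q]$ carry workload $<\bar w$, then (since some $B$-job is released after $r_{j_1}+\alpha^q$) \OPT's single batch must cover $[r_{j_1}+F^*,\,r_{j_1}+\alpha^q]$ using only those $<\bar w=\alpha^q-F^*-s$ units of released $B$-workload, which forces $>s$ idle time inside $B$ and hence \OPT-overhead $\geq 2s$, contradicting the hypothesis. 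You should replace your window-covering plan with this direct idle-time argument.
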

\begin{proof}
Assume to the contrary that \ALG processes the jobs of $B$ in three batches with $j_1, j_2, j_3 \in B$ being the jobs starting the first, the second and the third batch, respectively.
Then there need to be two jobs $i_1$ and $i_2$ that are processed between the first and second and second and third such batch, respectively.
Since $j_2$ is preferred over $i_2$ and by \cref{le:releaseDiff}, we have $r_{i_2} \geq r_{j_2} \geq r_{j_1} + \alpha^q$.
Also, since $i_2$ is preferred over $j_3$ and by \cref{le:workload}, we have $r_{i_2} + \alpha^q \leq r_{j_3}$.
Hence, \OPT cannot process $i_2$ before nor after $B$ (since then either $j_1$ or $i_2$ would have a flow time larger than $F^*$), which is a contradiction to the fact that $B$ is a batch in \OPT.

If \ALG processes the jobs of $B$ in two batches, let $j_1, j_2 \in B$ be the jobs starting the first batch and the second batch, respectively.
We start with the case that $w(B) < \bar w$ and show a contradiction.
We know that $r_{j_2} > r_{j_1} + \alpha^q$.
Consider an optimal schedule.
As $j_1$ cannot be started after $r_{j_1}+F^*$ and because \OPT processes $j_1$ and $j_2$ in the same batch $B$, the processing of $B$ needs to cover the interval $[r_{j_1} + F^*, r_{j_2}] \supseteq [r_{j_1} + F^*, r_{j_1}+ \alpha^q]$.
As $w(B) < \alpha^q - F^* -s$ this implies an additional overhead of at least $s$ associated to $B$, contradicting our assumption.

Therefore, assume that $w(B) \geq \bar w$ but there is no interval of length $\alpha^q$ with jobs of $B$ of size at least $\bar w$.
We know that $j_1$ needs to be started not later than $r_{j_1} + F^*$.
Also, the workload of jobs of $B$ released until $r_{j_1} + \alpha^q$ is below $\bar w$.
Hence, there needs to be a job in $B$ released not before $r_{j_1} + \alpha^q$.
This implies that the processing of $B$ needs to cover the entire interval $[r_{j_1} + F^*, r_{j_1} + \alpha^q]$.
However, this implies an additional overhead associated to $B$ of at least $s$, contradicting our assumption.
 \end{proof}

We are now ready to bound the competitiveness of \ALG.

\begin{theorem}
\label{th:comp}
\ALG is $O(\sqrt n)$-competitive.
Additionally, it holds $F = O(F^* + \sqrt{n p_{max} s})$.
\end{theorem}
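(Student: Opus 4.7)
The plan is to combine \cref{le:competitiveness} with a workload-counting argument based on \cref{le:necessaryCond}. If $F \leq \alpha^3 F^*$ the claim is immediate, so assume $F > \alpha^3 F^*$ and pick the unique $q \in \mathbb{N}_0$ with $\alpha^{q+1} \leq F < \alpha^{q+2}$. Then $\alpha^q \geq F/\alpha^2 > \alpha F^* \geq 13 F^*$; in particular $\alpha^q \geq F^*$, so \cref{le:competitiveness} applies and yields $F \leq \alpha^2 \bigl(F^* + \oha(S^\delta_{\alpha^q}) - \oho(I(S^\delta_{\alpha^q}))\bigr)$. Call the overhead difference on the right-hand side $\Delta$.

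The core step is to show $\Delta = O(n p_{max} s / \alpha^q)$. I would account for \ALG's overhead inside $S^\delta_{\alpha^q}$---which consists purely of setups since, by \cref{le:workload}, the machine never idles there---per \OPT-batch. For each \OPT-batch $B$ whose jobs all lie in $S^\delta_{\alpha^q}$, the first part of \cref{le:necessaryCond} bounds the attributed \ALG-overhead by $2s$, while $B$ incurs at least its one setup of $s$ in \OPT's schedule; hence the per-batch contribution to $\Delta$ is at most $s$. By the second part of \cref{le:necessaryCond}, whenever this contribution equals $s$ we must have $w(B) \geq \bar w = \alpha^q - F^* - s \geq \tfrac{11}{13}\alpha^q$, using $\alpha^q \geq 13 F^*$ and $F^* \geq s$ (\cref{obs:opt}). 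Since the total workload across all jobs is at most $n p_{max}$, at most $O(n p_{max}/\alpha^q)$ batches can be of this ``costly'' type, giving $\Delta \leq O(n p_{max} s/\alpha^q)$; the $O(1)$ \OPT-batches only partially contained in $S^\delta_{\alpha^q}$ (the first and last ones with a job released in $I(S^\delta_{\alpha^q})$) add at most $O(s) = O(F^*)$, which is absorbable into the $F^*$ term.

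Plugging the bound on $\Delta$ back and using $\alpha^q \geq F/\alpha^2$, I obtain $F \leq O(F^*) + O(n p_{max} s / F)$, equivalently $F^2 \leq O(F \cdot F^* + n p_{max} s)$. Solving this quadratic in $F$ (case split on whether $F$ dominates $F^*$ or $\sqrt{n p_{max} s}$) delivers the second bound $F = O(F^* + \sqrt{n p_{max} s})$. For the competitive ratio, \cref{obs:opt} gives $F^* \geq \max\{s, p_{max}\} \geq \sqrt{s \cdot p_{max}}$, so $\sqrt{n p_{max} s}/F^* \leq \sqrt{n}$ and hence $F/F^* = O(\sqrt n)$.

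The main obstacle is the careful per-batch attribution of overhead underlying the bound on $\Delta$: one has to verify that charging \ALG's setups inside $S^\delta_{\alpha^q}$ and \OPT's overhead inside $I(S^\delta_{\alpha^q})$ against the same set of \OPT-batches is legitimate, and that the few \OPT-batches only partially intersecting $S^\delta_{\alpha^q}$ contribute at most an additive $O(F^*)$. Once $\Delta$ is pinned down, the remaining manipulation is routine algebra combining \cref{le:competitiveness} and \cref{le:necessaryCond}.
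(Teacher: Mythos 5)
Your proof is correct and follows essentially the same route as the paper's: reduce via \cref{le:competitiveness} to bounding the overhead difference $\Delta$, invoke \cref{le:necessaryCond} to conclude that only \OPT-batches of workload at least $\bar w=\Theta(\alpha^q)$ can contribute positively, and count such batches against the total workload $n p_{max}$. The only cosmetic difference is the finishing algebra: the paper case-splits up front on $F \gtrless \sqrt{n s p_{max}}$ so that $\alpha^q=\Omega(\sqrt{n s p_{max}})$ yields $\Delta = O(\sqrt{n p_{max} s})$ directly, whereas you keep $\Delta = O(n p_{max} s / \alpha^q)=O(n p_{max} s / F)$ and solve the resulting quadratic $F^2 \leq O(F F^* + n p_{max} s)$.
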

\begin{proof}
If $F \leq \sqrt{n \cdot s \cdot p_{max}}$ holds, we are done as $F^* \geq \sqrt{s\cdot p_{max}}$ by \cref{obs:opt}.

Hence, consider the case where $F > \sqrt{n \cdot s \cdot p_{max}}$ and assume $\alpha^{q+1} \leq F < \alpha^{q+2}$.
Also we can assume $F^* \leq \frac{F}{\alpha^3} < \alpha^{q-1}$ as otherwise we obtain a constant competitiveness.
Consider $S^\delta_{\alpha^q}$.
We call a batch $B$ of \OPT \emph{short} if $w(B) < \bar w$ and \emph{long} otherwise.
According to \cref{le:necessaryCond}, we know that the overhead associated to jobs belonging to short batches is not larger in a schedule of \ALG than in \OPT.
On the other hand, overhead associated to jobs belonging to long batches can be at most by $s$ larger in \ALG than in \OPT.
However, as a long batch requires a workload of $\bar w = \alpha^q-F^*-s \geq \alpha^q-2F^* \geq \alpha^q-2\alpha^{q-1} \geq \frac{\sqrt{n\cdot s\cdot p_{max}}}{2\alpha^2}$, there can be at most $O(\sqrt{n} \cdot \sqrt{\frac{p_{max}}{s}})$ many long batches as $n$ jobs can have a workload of at most $n \cdot p_{max}$.
Hence, by \cref{le:competitiveness} we obtain the desired result.
 \end{proof}

For the case $k=2$ we can even strengthen the statement of \cref{le:releaseDiff}.
Given $S^\delta_{\alpha^q}$, let job $j_i$ be the first job of the $i$-th batch in $S^\delta_{\alpha^q}$ and note that $\tau_{j_i} = \tau_{j_{i+2}}$ as the batches form an alternating sequence of the two types.
We have the following lemma.
\begin{lemma}
\label{le:releaseDiff2}
Given $S^\delta_{\alpha^q}$, if $k=2$ then it holds $r_{j_i} > r_{j_{i-1}} + \alpha^q$, for all $i \geq 3$.
\end{lemma}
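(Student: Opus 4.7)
The plan is to leverage that, for $k=2$, the batches in $S^\delta_{\alpha^q}$ strictly alternate between the two types. For $i \geq 3$ this yields $\tau_{j_i} = \tau_{j_{i-2}} \neq \tau_{j_{i-1}}$, and, more importantly, the active type of the machine just before $j_{i-1}$ is scheduled coincides with $\tau_{j_{i-2}} = \tau_{j_i}$. Writing $t \coloneqq t_{j_{i-1}}$, I would then split on whether $j_i$ has already been released by time $t$.

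In the case $r_{j_i} \leq t$, the argument is purely algorithmic. The adjusted release times satisfy $\bar r_{j_{i-1}}(t) = r_{j_{i-1}} + \alpha^q$ (type mismatch) and $\bar r_{j_i}(t) = r_{j_i}$ (type match), and on a tie the tie-breaking rule of \ALG would pick $j_i$. Since the algorithm nonetheless selects $j_{i-1}$, I obtain the strict inequality $r_{j_{i-1}} + \alpha^q < r_{j_i}$. Note that strict inequality here (rather than just $\leq$) depends precisely on the tie-breaking rule, and this is exactly what makes the lemma a genuine strengthening of \cref{le:releaseDiff}.

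In the complementary case $r_{j_i} > t$, I would deduce the bound from properties of $j_{i-1}$ alone. By \cref{le:workload}, $F_{j_{i-1}} \geq 3\alpha^q$; and $p_{j_{i-1}} \leq F^* \leq \alpha^q$ follows from \cref{obs:opt} together with the standing assumption $\alpha^q \geq F^*$. A short estimate then gives $t = c_{j_{i-1}} - p_{j_{i-1}} \geq r_{j_{i-1}} + 2\alpha^q$, and hence $r_{j_i} > t \geq r_{j_{i-1}} + 2\alpha^q > r_{j_{i-1}} + \alpha^q$.

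The main subtlety worth flagging is why the restriction $i \geq 3$ is essential. For $i = 2$ there is no reason the active type just before $j_1$ is scheduled should equal $\tau_{j_2}$, since $j_1$ opens $S^\delta_{\alpha^q}$ and the preceding history is arbitrary, so the Case~1 argument breaks down precisely there. For $i \geq 3$, the two-type alternation pins the active type down and the tie-breaking rule converts the algorithm's weak preference into the desired strict separation.
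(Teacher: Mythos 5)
Your proposal is correct and, despite the different packaging, relies on exactly the same two ingredients as the paper's proof: (i) the $3\alpha^q$ lower bound on flow times from \cref{le:workload} forces $j_{i-1}$ to be started late, and (ii) the algorithm's preference rule together with the observation that for $k=2$ the active type just before $j_{i-1}$ is scheduled equals $\tau_{j_{i-2}}=\tau_{j_i}$. The paper combines these into a single proof by contradiction: assuming $r_{j_i}\le r_{j_{i-1}}+\alpha^q$, item~(i) shows $j_{i-1}$ is not started before $r_{j_i}$, hence $j_i$ is already released at the decision time, and item~(ii) then says $j_i$ would have been appended to the $(i-2)$-th batch. Your version runs the same two observations as a direct two-case split instead. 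There is no substantive difference.

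One small imprecision worth fixing: you evaluate $\bar r_\cdot(t)$ at $t=t_{j_{i-1}}$, but the algorithm's decision is taken at the moment the $(i-2)$-th batch completes, i.e.\ at $t_{j_{i-1}}-s$ (there is no idle time inside $S^\delta_{\alpha^q}$ by \cref{le:workload}), and at $t_{j_{i-1}}$ itself the setup has already been performed so $\mathrm{active}(t_{j_{i-1}})=\tau_{j_{i-1}}$, making $\bar r_{j_{i-1}}(t_{j_{i-1}})=r_{j_{i-1}}$. The case split should therefore be on whether $r_{j_i}\le t_{j_{i-1}}-s$ or not. This is cosmetic: in the second case your estimate gives $r_{j_i}>t_{j_{i-1}}-s\ge r_{j_{i-1}}+2\alpha^q-s\ge r_{j_{i-1}}+\alpha^q$ since $s\le F^*\le\alpha^q$, so the conclusion survives with the corrected threshold.
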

\begin{proof}
Consider a fixed job $j_i$ with $i \geq 3$ and suppose to the contrary that $r_{j_i} \leq r_{j_{i-1}} + \alpha^q$ holds.
By definition of $S^\delta_{\alpha^q}$, job $j_{i-1}$ is not started before $r_{j_i}$ and all jobs processed later have a release time not smaller than $r_{j_{i-1}}$.
Hence, by the definition of \ALG, $j_i$ would belong to the same batch as $j_{i-2}$, which is a contradiction.
 \end{proof}
Based on this fact, we can show that \OPT can essentially not process any jobs that belong to different batches in $S^\delta_{\alpha^q}$ in one batch.
Hence, \OPT performs roughly the same amount of setups as \ALG does and we have the following theorem by \cref{le:competitiveness}.
\begin{theorem}
\label{th:2comp}
If $k=2$, then \ALG is $O(1)$-competitive. 
\end{theorem}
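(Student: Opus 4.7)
The plan is to invoke \cref{le:competitiveness}, reducing the theorem to showing $\oha(S^\delta_{\alpha^q}) - \oho(I(S^\delta_{\alpha^q})) = O(F^*)$ under the hypothesis $\alpha^{q+1}\le F < \alpha^{q+2}$ and $\alpha^q \ge F^*$ (if no such $q$ exists, then $F = O(F^*)$ holds trivially). By \cref{le:workload} there is no idle time inside $S^\delta_{\alpha^q}$, so $\oha(S^\delta_{\alpha^q}) = \ell s$, where $\ell$ denotes the number of \ALG-batches in $S^\delta_{\alpha^q}$. Since the types of consecutive batches alternate for $k=2$, the first jobs $j_i, j_{i+1}$ of adjacent batches trivially lie in different \OPT-batches; the real work is to show that first jobs of non-adjacent \ALG-batches are separated by \OPT as well, up to a constant loss.

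The key claim I would establish is: for all $i' \ge 2$ and $i'' \ge i'+2$, the first jobs $j_{i'}$ and $j_{i''}$ of $B_{i'}$ and $B_{i''}$ are processed in different batches of \OPT. Combined with the alternating-type observation, the jobs $j_2, j_3, \ldots, j_\ell$ then witness $\ell - 1$ pairwise distinct \OPT-batches, all of whose members are released in $I(S^\delta_{\alpha^q})$; this yields $\oho(I(S^\delta_{\alpha^q})) \ge (\ell-2)s$. The excess overhead is therefore at most $2s \le 2F^*$ (using \cref{obs:opt}), and plugging into \cref{le:competitiveness} with the fixed constants $\delta=3$ and $\alpha=13$ gives $F \le \alpha^2(\delta-2)^{-1}(F^* + 2F^*) = O(F^*)$.

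To prove the claim, I would argue by contradiction: suppose $j_{i'}$ and $j_{i''}$ share a batch in \OPT, and let $t^*_j, c^*_j$ denote start and completion times of job $j$ in \OPT's schedule. Then $t^*_{j_{i'}} \le r_{j_{i'}} + F^*$ and $c^*_{j_{i''}} \ge r_{j_{i''}}$, so throughout $[t^*_{j_{i'}}, c^*_{j_{i''}}]$ \OPT processes only jobs of type $\tau(j_{i'})$. Applying \cref{le:releaseDiff2} (valid since $i'+1 \ge 3$) yields $r_{j_{i'+1}} > r_{j_{i'}} + \alpha^q \ge r_{j_{i'}} + F^* \ge t^*_{j_{i'}}$, and iterating \cref{le:releaseDiff2} from index $i'+2$ up to $i''$ gives $r_{j_{i''}} > r_{j_{i'+1}} + \alpha^q \ge r_{j_{i'+1}} + F^*$. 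Consequently $[r_{j_{i'+1}}, r_{j_{i'+1}} + F^*] \subseteq [t^*_{j_{i'}}, c^*_{j_{i''}}]$, yet $j_{i'+1}$ has a type different from $\tau(j_{i'})$ and must be completed by $r_{j_{i'+1}} + F^*$ in \OPT, contradicting that \OPT is busy with type $\tau(j_{i'})$ throughout that window. The main subtlety is the boundary index $i'=1$, where \cref{le:releaseDiff2} does not separate $j_1$ from $j_3$; this simply drops $j_1$ from the list of witnesses and is absorbed by the constant slack in the final overhead estimate.
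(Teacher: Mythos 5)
Your proposal is correct and takes essentially the same route as the paper's own proof: both hinge on \cref{le:releaseDiff2} to show that \OPT cannot merge the batches that \ALG forms in $S^\delta_{\alpha^q}$, and then conclude via \cref{le:competitiveness}. You are somewhat more explicit than the paper in establishing that $j_2,\ldots,j_\ell$ land in \emph{pairwise} distinct \OPT-batches (the paper only states the case $j_i$ versus $j_{i+2}$, leaving the extension to non-adjacent same-type pairs implicit) and in carrying out the overhead bookkeeping $\oha - \oho \le 2s$ before plugging into \cref{le:competitiveness}, but the underlying argument is the same.
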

\begin{proof}
Assume $F > 2F^*$ as otherwise we are done.
Consider $S^\delta_{\alpha^q}$ such that $\alpha^q < F \leq \alpha^{q+1}$.
By \cref{le:releaseDiff2}, we know that $r_{j_{i+1}} > r_{j_i}+2F^*$ for all $i \in [2,\ell]$.
Now suppose to the contrary that the optimal solution processes two jobs $j_i$ and $j_{i+2}$ in the same batch.
As $j_i$ cannot be completed later than $r_{j_i}+F^*$ and job $j_{i+2}$ is not released before $r_{j_{i+2}}$, this batch needs to cover the interval $[r_{j_i}+F^*, r_{j_{i+2}}]$.
However, job $j_{i+1}$ needs to be started during the interval $[r_{j_{i+1}}, r_{j_{i+1}}+F^*] \subseteq [r_{j_i}+F^*, r_{j_{i+2}}]$, which is a contradiction.

Hence, the optimal solution cannot process any two jobs $j_i$ and $j_{i+2}$, for all $i\geq 2$, in the same batch.
By \cref{le:competitiveness} we obtain a competitiveness of $O(1)$.
 \end{proof}

To conclude this section, we show that the bound of $O(\sqrt{n})$ from \cref{th:comp} for the competitiveness of \ALG is tight and that a lower bound of $\Omega(\sqrt{n})$ holds for any greedy-like algorithm as defined in \cref{sec:model}.
This also implies that the $\Omega(\sqrt{n})$ bound holds for \ALG independent of how $\lambda$ is chosen or increased (and even if done at random).
The construction in the proof of \cref{th:lowerBound} is a generalization of a worst-case instance given in \cite{saks}.

\begin{theorem}
\label{th:lowerBound}
Any greedy-like algorithm is $\Omega(\sqrt n)$-competitive.
\end{theorem}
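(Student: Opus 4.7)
The plan is to prove the lower bound by constructing, for every greedy-like algorithm $A$, a hard instance on which $A$'s maximum flow time is $\Omega(\sqrt{n})$ times the optimum. Since $A$ is deterministic and, by the greedy-like definition, its priority ordering at each decision point is fixed independently of which jobs have actually been released, the adversary can simulate $A$ in advance and tailor the instance so as to exploit this ordering. The construction generalizes the two-type $\Omega(n)$-hard instance of \cite{saks} by using $k = \Theta(\sqrt{n})$ types instead of two, which spreads a total ``switching budget'' of $\Omega(n)$ across $\Theta(\sqrt{n})$ types and thus yields a ratio of $\Omega(\sqrt{n})$ between $A$ and \OPT.

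Concretely, I would take $k = \sqrt{n}$ types with unit processing times and setup time $s$ chosen so that $s \gg 1$, so each type switch is expensive compared to the processing of a single job. The instance is released in $\Theta(\sqrt{n})$ phases. In every phase, one ``anchor'' job of each type is released so that whichever type $A$ decides to process next, the adversary is ready with a carefully placed follow-up job: immediately after the anchor $A$ chooses is completed, a job of another type has higher priority in $A$'s ordering than any same-type continuation, forcing a setup. Iterating across a phase, $A$ is driven into $\Theta(\sqrt{n})$ type switches per phase and $\Theta(n)$ setups in total.

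For the analysis I would then argue that \OPT, knowing the whole sequence of releases in advance, can organize the same jobs into a constant number of batches per type and thereby execute the instance with only $\Theta(\sqrt{n})$ setups overall. By placing blockers close in time to the anchors they are meant to fool, the release window of any single type stays within $O(\sqrt{n})$, so \OPT attains max flow time $O(s+\sqrt{n})$, while $A$'s schedule has length $\Omega(n\cdot s)$ and some job in it accumulates flow time $\Omega(\sqrt{n}\cdot s)$. Choosing $s$ appropriately yields the desired competitive ratio of $\Omega(\sqrt{n})$, and by applying the same argument with $n$ replaced by an arbitrary parameter, the bound holds for every $n$.

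The main obstacle is to make the blocker construction robust against the state-dependent priority functions a greedy-like algorithm is allowed to use, in particular those that by default prefer the currently active type (as \ALG does via its $\lambda$-offset). For such an $A$, a blocker must be planted so that, at the precise completion time in question, even the active-type-preferring rule is overridden. This is possible in principle because the tie-breaking and offset rules are themselves fixed and known to the adversary, who can therefore choose the blocker's release time, type, and id in the priority space accordingly; however, one has to verify by an inductive argument along $A$'s execution trace that all blockers can be positioned consistently with each other and, at the same time, with the structural requirements that let \OPT keep every type's release window short and batch the jobs with $O(\sqrt{n})$ setups.
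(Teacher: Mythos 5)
Your construction is qualitatively more ambitious than the paper's, and that ambition is exactly where it breaks. You aim to drive the algorithm into $\Theta(\sqrt{n})$ setups \emph{per phase} (hence $\Theta(n)$ setups total) while \OPT performs only $\Theta(\sqrt{n})$ setups overall. If that were possible, the resulting ratio would be $\Omega(n)$ for $s\geq\sqrt{n}$, not $\Omega(\sqrt{n})$, and would contradict the matching upper bound of \cref{th:comp}: the specific greedy-like algorithm \ALG is $O(\sqrt{n})$-competitive, so no adversary can force it into your scenario. Concretely, once \ALG's balance parameter $\lambda$ has grown past the inter-release gap of your blockers, it simply refuses to switch: a blocker of a different type only wins if it has waited at least $\lambda$, by which point \ALG has already absorbed the current type's batch. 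The ``plant a blocker after every completion'' strategy cannot pierce state-dependent, growing switching thresholds of this kind, and you flag this yourself as the unresolved step. Second, your \OPT analysis is inconsistent: if in each phase you release one anchor per type (so $\sqrt{n}$ distinct types appear in every phase), \OPT must also pay $\Omega(\sqrt{n})$ setups per phase unless it defers a type's jobs across many phases, which in turn inflates \OPT's flow time to $\Omega(n)$. You cannot simultaneously keep \OPT's per-phase setup count constant and its flow time $O(s+\sqrt{n})$ with this layout.

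The paper's construction is far more modest and therefore robust. It uses $s=1$, $\sqrt{n}$ phases, and \emph{two fresh types per phase}: one is a single ``orphan'' job and the other carries the remaining $\sqrt{n}-1$ jobs. The adversary exploits the algorithm's pre-committed ordering only to misplace that one orphan, forcing exactly one extra setup per phase ($3$ vs.\ $2$ for \OPT). The phases are spaced so that \OPT finishes each phase just in time (flow time $O(1)$), while the one-time-unit deficit per phase accumulates to an $\Omega(\sqrt{n})$ flow time for the algorithm over $\sqrt{n}$ phases. If the algorithm instead tries to dodge the trap by scheduling the two early jobs of some phase after all others, it already incurs $\Omega(\sqrt{n})$ flow time on those jobs. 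The lesson is that a single extra setup per phase, accumulated, is exactly the right amount of damage to extract; demanding $\Theta(\sqrt{n})$ extra setups per phase both overshoots the target and forces the adversary into instances where \OPT cannot stay cheap.
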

\begin{proof}
The adversary will be defined such that the optimum flow time is $O(1)$ while a fixed greedy-like algorithm $A$ has a flow time of $\Omega(\sqrt{n})$.
We define the adversary by specifying the instance in phases.
Let the setup time be $s=1$.
\begin{itemize}
\item During the $i$-th phase, $\sqrt{n}$ unit size jobs of two types $\tau_{i_1} \neq \tau_{i_2}$ that did not occur in any previous phase are released in $\sqrt{n}$ consecutive (discrete) time steps.
\item The first job of phase $i$ is released at time $(i-1)(\sqrt{n} + 2)$.
(Hence, $\sqrt{n}$ jobs are released in $\sqrt{n}$ time steps, then two time steps no job is released.
Afterward this pattern is repeated.)
\item The first job released in phase $i$ is of type $\tau_{i_1}$ and the second one of type $\tau_{i_2}$.
If $A$ prefers the job of type $\tau_{i_1}$, let all remaining jobs of phase $i$ be of type $\tau_{i_1}$.
If $A$ prefers the job of type $\tau_{i_2}$, let all remaining jobs of phase $i$ be of type $\tau_{i_2}$.
\end{itemize}
We analyze the flow time of \OPT and algorithm $A$.
For each phase, the optimal solution can first process the job belonging to the type of which only one job is released and afterward all remaining jobs released during the phase.
Hence, \OPT can always start the setup for phase $i$ before or at time $(i-1)(\sqrt{n} + 3)$ and has processed all jobs of phase $i-1$ at that point in time, because it executes $\sqrt{n}$ unit size jobs and needs two setups per phase.
This gives a maximum flow time of at most $5$ (tight if $A$ prefers the job of type $\tau_{i_1}$, hence \OPT prioritizes the job of type $\tau_{i_2}$ and the first job of type $\tau_{i_1}$ remains in the system for one time step until the job of type $\tau_{i_2}$ is released, for two additional time steps while the job of type $\tau_{i_2}$ is executed, and finally for two more time steps where the job itself is executed).

For $A$ we first make two observations.
(1) We can assume that for some phase $i$, $A$ neither processes the job released first nor the job released second after the job released last as otherwise $A = \Omega(\sqrt{n})$ holds.
(2) We can assume that $A$ processes all jobs of phase $i$ before any job of phase $i+1$ because no two jobs of different phases can be processed in the same batch and hence, processing a job of phase $i$ later than a job of phase $i+1$ cannot be advantageous.
By these two observations, the algorithm $A$ has to do three setups for each phase by the construction of the instance.
Thus, it finishes the last job of phase $i$ not before $i(\sqrt{n} + 3)$.
As the adversary can construct $\sqrt{n}$ phases, the last job is finished at $\sqrt{n}(\sqrt{n}+3)$ and it is released not later than $(\sqrt{n}-1)(\sqrt{n} + 2)+\sqrt{n} = \sqrt{n}(\sqrt{n} + 2) - 2$.
Hence, the flow time of $A$ is at least $\sqrt{n}(\sqrt{n}+3) - (\sqrt{n}(\sqrt{n} + 2) - 2) = \sqrt{n} + 2 = \Omega(\sqrt{n})$.
 \end{proof}

\section{Smoothed Competitive Analysis}
\label{sec:smoothed}
In this section, we analyze the smoothed competitiveness of \ALG.
We consider the following \emph{multiplicative smoothing model} from \cite{smoothedComp}.
Let $p_j$ be the processing time of a job $j$ as specified by the adversary in instance $\mathcal{I}$.
Then the perturbed instance $\mathcal{\hat I}$ is defined as $\mathcal{I}$ but with processing times $\hat p_j$ defined by $\hat p_j = (1 + X_j)p_j$ where $X_j$ is chosen at random according to the smoothing distribution.
For $0 < \varepsilon < 1$  being a fixed parameter describing the strength of perturbation, we consider two smoothing distributions.
In case of a \emph{uniform smoothing distribution}, $X_j$ is chosen uniformly at random from the interval $[-\varepsilon, \varepsilon]$.
More formally, $X_j \sim \mathcal{U}(-\varepsilon, \varepsilon)$ where $\mathcal{U}(a,b)$ denotes the continuous uniform distribution with probability density function
$f(x) = \frac{1}{b-a}$ for $a \leq x \leq b$ and $f(x)= 0$ otherwise.
Hence, for $\hat p_j$ we have $ \hat p_j \in [(1-\varepsilon)p_j, (1+\varepsilon)p_j]$.
In case of a \emph{normal smoothing distribution}, $X_j$ is chosen from a normal distribution with expectation $0$, standard deviation $\sigma=\frac{\varepsilon}{\sqrt{2.64}}$ and truncated at $-1$ and $1$.
More formally, $X_j \sim \mathcal{N}_{(-1,1)}(0, \sigma^2)$ where $\mathcal{N}_{(a,b)}(\mu, \sigma^2)$ denotes the truncated normal distribution with probability density function $f(x) = \frac{\phi(\frac{x-\mu}{\sigma})}{\sigma(\Phi(\frac{b-\mu}{\sigma})-\Phi(\frac{a-\mu}{\sigma}))}$ for $a < x < b$ and $f(x)=0$ otherwise.
Here $\phi(\cdot)$ denotes the density function of the standard normal distribution and $\Phi(\cdot)$ the respective (cumulative) distribution function.

Our goal is to prove a smoothed competitiveness of $O(\varepsilon^{-2} s \log^2 n)$.
We analyze the competitiveness by conditioning it on the flow time of \OPT and its relation to the flow time of \ALG.
Let $\mathcal{E}_\OPT^q$ be the event that $F^* \in [\alpha^q, \alpha^{q+1})$ and $\mathcal{E}_\ALG^q$ be the event that $F > c_1 \alpha^{q+1} \varepsilon^{-2} s \log^2 n$ (for a constant value of $c_1$ determined by the analysis).
Also, denote by $\mathcal{\bar E}_x^q$ the respective complementary events. 
Then for a fixed instance $\mathcal{I}$ we obtain 
\begin{align*}
\mathbb{E}_{\mathcal{\hat I} \gets N(\mathcal{I})}\left[\frac{F(\mathcal{\hat I})}{F^*(\mathcal{\hat I})}\right]  
= \sum_{q \in \mathbb{N}} \mathbb{E}&\left[\frac{F(\mathcal{\hat I})}{F^*(\mathcal{\hat I})} \mid \mathcal{E}_\OPT^q \wedge \mathcal{\bar E}_\ALG^q\right]
\cdot \Pr[\mathcal{E}_\OPT^q \wedge \mathcal{\bar E}_\ALG^q] \\ 
+ \sum_{ q= \lfloor \log_\alpha s\rfloor}^{\lceil \log_\alpha n \rceil}  \mathbb{E}&\left[\frac{F(\mathcal{\hat I})}{F^*(\mathcal{\hat I})} \mid \mathcal{E}_\OPT^q \wedge \mathcal{E}_\ALG^q \right]
\cdot \Pr[\mathcal{E}_\OPT^q \wedge \mathcal{E}_\ALG^q] \\
+ \sum_{q > \lceil \log_\alpha n \rceil}  \mathbb{E}&\left[\frac{F(\mathcal{\hat I})}{F^*(\mathcal{\hat I})} \mid \mathcal{E}_\OPT^q \wedge \mathcal{E}_\ALG^q \right]
\cdot \Pr[\mathcal{E}_\OPT^q \wedge \mathcal{E}_\ALG^q] \enspace .
\end{align*}
Note that the first sum is by definition directly bounded by $O(\varepsilon^{-2} s \log^2 n)$ and the third one by $O(\sqrt{s})$ according to \cref{th:comp}. 
Thus, we only have to analyze the second sum.
We show that we can complement the upper bound on the ratio, which can be as high as $O(\sqrt{n})$ by \cref{th:comp}, by $\Pr[\mathcal{E}_\OPT^q \wedge \mathcal{E}_\ALG^q] \leq 1/n$.
From now on we consider an arbitrary but fixed $q \geq \lfloor \log_\alpha s \rfloor$, and in the following we analyze $\Pr[\mathcal{E}_\OPT^q \wedge \mathcal{E}_\ALG^q]$.
Let $\Gamma \coloneqq \alpha^{i-1}$ such that $i$ is the largest integer with $\alpha^i \leq c_1\varepsilon^{-2}\alpha^{q+1} s \log^2 n$.
Thus we have $\Gamma \geq c_1 \alpha^{q-1} \varepsilon^{-2} s \log^2 n$.

On a high level, the idea of our proof is as follows:
We first define a careful partitioning of the time horizon into consecutive intervals (\cref{sec:partitioning}).
Depending on the amount of workload released in each such interval and an estimation of the amount of setups required for the respective jobs (\cref{sec:setupEstimatation}), we then classify each of them to either be dense or sparse (\cref{sec:events}).
We distinguish two cases depending on the number of dense intervals in $\mathcal{I}$.
If this number is sufficiently large, $F^*$ is, with high probability (w.h.p.), not much smaller than $F$ (\cref{le:highOpt}).
This holds as w.h.p.\ the perturbation increases the workload in a dense interval so that even these jobs cannot be scheduled with a low flow time by \OPT.
In case the number of dense intervals is small, the analysis is more involved.
Intuitively, we can show that w.h.p.\ there is only a logarithmic number of intervals between any two consecutive sparse intervals in which the perturbation decreases the workload to a quite small amount.
Between such sparse intervals the flow time cannot increase too much (even in the worst-case) and during a sparse interval \ALG can catch up with the optimum:
If taking a look at the flow time of the job completing at time $t$ and continuing this consideration over time, we then obtain a sawtooth pattern always staying below a not too large bound for the flow time of \ALG (\cref{le:longRun}).

\subsection{Partitioning of Instance \texorpdfstring{$\mathcal{I}$}{I}}
\label{sec:partitioning}
We define a partitioning of the instance $\mathcal{I}$, on which our analysis of the smoothed competitiveness will be based on.
We partition the time interval $[r_{min}, r_{max}]$, where $r_{min}$ and $r_{max}$ are the smallest and largest release time, as follows:
Let a \emph{candidate interval} $C$ be an interval such that $|C| = \Gamma$ and such that for some $\tau$ it holds $\sum_{j : r_j \in C, \tau_j = \tau} p_j \geq \Gamma/4$.
Intuitively, a candidate interval $C$ is an interval on which, in $\mathcal{\hat I}$, \ALG possibly has to perform more setups than \OPT does (which, if all jobs released in the interval belong to $S^\delta_\Gamma$ and under the assumption that $\mathcal{E}_\OPT^q \wedge \mathcal{E}_\ALG^q$ holds, according to \cref{le:necessaryCond} requires a workload of at least $\frac{\Gamma}{2}$ in $\mathcal{\hat I}$ and hence, at least $\frac{\Gamma}{4}$ in $\mathcal{I}$).
Let $C_1$ be the first candidate interval $C$.
For $i>1$ let $C_i$ be the first candidate interval $C$ that does not overlap with $C_{i-1}$.

Now we consider groups of $\mu \coloneqq \left\lceil \frac{\varepsilon^2 \Gamma}{c_2 s^2\log^2n}\right\rceil$ many consecutive candidate intervals $C_i$, for some constant $c_2$ determined by the further analysis.
Precisely, these groups are defined as $I_1 = [r_{min}, r(C_\mu)]$, $I_2 = (r(C_\mu), r(C_{2\mu})]$ and so on.
In the rest of the paper we consistently use $I_i$ to denote these intervals.
Let $\bigcup_i I_i = [r_{min},r_{max}]$ by (possibly) extending the last $I_i$ so that its right endpoint is $r_{max}$.
Although it worsens constants involved in the competitiveness, we use $\mu \leq \frac{2\varepsilon^2 \Gamma}{c_2 s^2\log^2n}$ for $c_1 \geq \alpha c_2$ for the sake of simplicity.

\subsection{Estimation of Setups in $I_i$}
\label{sec:setupEstimatation}
\begin{algorithm}[t]
Construct a sequence  $(j_1, j_2, \ldots, j_m)$ of all jobs released in $I$ as follows:
\begin{enumerate}[(1)]
\item For $i=1,2,\ldots, m$ set $j_i$ to be job $j \notin (j_1, \ldots j_{i-1})$ with smallest $\bar r_j$, where 
\[ \bar r_j \coloneqq
  \begin{cases}
    r_j       & \quad \text{if } \tau_j = \tau_{j_{i-1}}\\
    r_j + \Gamma  & \quad \text{else.}\\
  \end{cases}
\]
To break a tie, prefer job $j$ with $\tau_j = \tau_{j_{i-1}}$.
\item Let $N_s(I)$ be the number of values $i$ such that $\tau_{j_i} \neq \tau_{j_{i-1}}$.
\end{enumerate}
\caption{Description of \textsc{SetupEstimate($I$)}}
\label{fig:alg2}
\end{algorithm}
Before we can now classify intervals $I_i$ to be dense or sparse, we need an estimate $N_s(I)$ on the number of setups \OPT and \ALG perform on jobs released in a given interval $I$. 
We require $N_s(I)$ to be a value uniquely determined by the instance $\mathcal{I}$ and hence, in particular not to be a random variable.
This is essential for our analysis and avoids any computation of conditional probabilities.
For the definition of $N_s(I)$ consider the construction by \textsc{SetupEstimate($I)$} in \cref{fig:alg2}.
For a fixed interval $I$, it essentially mimics \ALG in $S^\delta_{\Gamma}$ in the sense that \cref{le:necessaryCond2} holds completely analogous to \cref{le:necessaryCond}.
Also, note that the construction is indeed invariant to job sizes and hence to perturbations.
It should not be understood as an actual algorithm for computing a schedule, however, for ease of presentation we refer to the sequence constructed as if it was a schedule.
Particularly, we say that it processes two jobs $j_i$ and $j_{i'}$ with $\tau_{j_i} = \tau_{j_{i'}}$ in different batches if there is an $i''$ such that $i<i''<i'$ with $\tau_{j_{i''}} \neq \tau_{j_i}$.

For two jobs $j_1$ and $j_2$ of a common type $\tau$ which start two batches in \textsc{SetupEstimate}($I$), $r_{j_2} \geq r_{j_1} + \Gamma$ holds.
Hence, by the exact same line of arguments as in the proof of \cref{le:necessaryCond} we have the following lemma.
\begin{lemma}
\label{le:necessaryCond2}
Assume $\mathcal{E}_\OPT^q \wedge \mathcal{E}_\ALG^q$ holds.
Let $B$ be a batch in \OPT.
Let $I$ be such that $r_j \in I$ for all $j \in B$.
An overhead of at most $2s$ is associated to $B$ in \textsc{SetupEstimate}($I$).

Also, if the overhead associated to $B$ in \OPT is smaller than $2s$ and is $2s$ in the schedule of \textsc{SetupEstimate}($I$), $w(B) \geq \Gamma- F^* -s \eqqcolon \bar w$ needs to hold and jobs of $B$ with size at least $\bar w$ need to be released in an interval of length $\Gamma$.
\end{lemma}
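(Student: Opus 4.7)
The plan is to port the proof of \cref{le:necessaryCond} almost verbatim, exploiting the structural parallels between \ALG and \textsc{SetupEstimate}. The paper already hands me the two crucial ingredients: (i) \textsc{SetupEstimate} applies the same adjusted-release-time tiebreak rule with balance parameter $\Gamma$, and (ii) for any two same-type jobs starting different batches in \textsc{SetupEstimate}($I$), their release times differ by at least $\Gamma$. Combined with the assumption $\mathcal{E}_\OPT^q \wedge \mathcal{E}_\ALG^q$, which gives $F^* < \alpha^{q+1} \leq \Gamma$, these are exactly the analogs of \cref{prop:releaseDiff} and \cref{le:releaseDiff} that drove the original proof.

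For the first statement (overhead at most $2s$), I would assume toward contradiction that \textsc{SetupEstimate} splits $B$ into at least three batches, say with $j_1, j_2, j_3 \in B$ starting three of them in order of appearance in the sequence. Then two jobs $i_1, i_2 \notin B$ must sit between $j_1,j_2$ and $j_2,j_3$, respectively. Because $j_2$ is preferred over $i_2$ at the instant the second batch starts, and by the release-gap property applied to $j_1, j_2$, we get $r_{i_2} \geq r_{j_2} \geq r_{j_1} + \Gamma$. Symmetrically, because $i_2$ is preferred over $j_3$, we get $r_{i_2} + \Gamma \leq r_{j_3}$. In \OPT\ the job $i_2$ can therefore be scheduled neither before the completion of $j_1$ nor after $j_3$ starts without forcing some flow time exceeding $F^* < \Gamma$, contradicting that $B$ is a single batch in \OPT.

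For the second statement, I would assume \textsc{SetupEstimate} splits $B$ into exactly two batches started by $j_1, j_2 \in B$, so that $r_{j_2} \geq r_{j_1} + \Gamma$. If $w(B) < \bar w = \Gamma - F^* - s$, then since \OPT\ must begin $j_1$ by time $r_{j_1} + F^*$ and must keep the $B$-batch running until at least $r_{j_2} \geq r_{j_1} + \Gamma$, the processing of $B$ has to span an interval of length $\Gamma - F^* > w(B) + s$, forcing an extra setup of size $s$ in \OPT\ and contradicting the hypothesis that the overhead associated to $B$ in \OPT\ is strictly below $2s$. If instead $w(B) \geq \bar w$ but no length-$\Gamma$ interval of release times contains $B$-jobs of total size at least $\bar w$, then some job of $B$ is released strictly after $r_{j_1} + \Gamma$, and the same spanning argument again requires an extra setup in \OPT, a contradiction.

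The main obstacle, and really the only nontrivial point, is to make sure the translation from \ALG\ to \textsc{SetupEstimate} is legitimate at every step: \textsc{SetupEstimate} is not a real schedule, so the phrases ``preferred over'' and ``batch'' must be interpreted purely combinatorially via the sequence it constructs, and the gap inequalities $r_{i_2} \geq r_{j_2}$ etc.\ must be derived strictly from its tiebreaking rule rather than from any timing argument. Once this reinterpretation is in place, and once $F^* \leq \Gamma$ is recorded from $\mathcal{E}_\OPT^q$ together with $\Gamma \geq \alpha^{q+1}$, the remainder is a line-by-line copy of the proof of \cref{le:necessaryCond} with $\alpha^q$ replaced by $\Gamma$.
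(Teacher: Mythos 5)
Your proposal is correct and takes essentially the same route as the paper: the paper itself dispenses with a formal proof of \cref{le:necessaryCond2}, stating only that it follows ``by the exact same line of arguments'' as \cref{le:necessaryCond} once the release-gap property for \textsc{SetupEstimate}$(I)$ and the bound $F^* < \alpha^{q+1} \leq \Gamma$ are in place, and your write-up is exactly that porting exercise, including the key observation that ``preferred over'' and ``batch'' must be read purely combinatorially off the constructed sequence (which, as you implicitly exploit, even simplifies matters slightly by removing the availability subtlety that forced the paper to invoke \cref{le:workload} in the original argument).
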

In the next two lemmas we show that $N_s(I_i)$ is indeed a good estimation of the number of setups \OPT and \ALG have to perform, respectively.
\cref{le:saveSetups} essentially follows by \cref{le:necessaryCond2} together with the definition of $I_i$ to consist of $\mu$ many candidate intervals.
To prove \cref{le:correctEstimate} we exploit the fact that all jobs in $S_\Gamma^\delta$ have a flow time of at least $3\Gamma$ by \cref{le:workload} so that \ALG and \textsc{SetupEstimate} essentially behave in the same way.
\begin{lemma}
\label{le:saveSetups}
Assume $\mathcal{E}_\OPT^q \wedge \mathcal{E}_\ALG^q$ holds and consider $I_i$ for a fixed $i$.
For the overhead of \OPT it holds $\oho(I_i) \geq (N_s(I_i) -6\mu) s$.
\end{lemma}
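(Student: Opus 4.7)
The plan is to establish that $\oho(I_i)$ is lower bounded by an amount proportional to the number of \OPT batches intersecting $I_i$, and then to show that \textsc{SetupEstimate} cannot perform more than $6\mu$ additional setups beyond that number.

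I would first observe the lower bound $\oho(I_i) \ge (t-1)\,s$, where $t$ denotes the number of \OPT batches that contain at least one job released in $I_i$. Indeed, if $j_1$ and $j_2$ are the first and last jobs released in $I_i$ to be processed by \OPT, then between $t_{j_1}$ and $c_{j_2}$ the machine processes these $t$ batches, and all but possibly the one containing $j_1$ contribute a setup of length $s$ within this window, which is accounted for in $\oho(I_i) = c_{j_2} - r_{j_1} - w(I_i)$. Consequently, it suffices to prove $N_s(I_i) \le t + 6\mu - 1$.

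To do so, I would classify each relevant \OPT batch as \emph{contained} (all jobs released in $I_i$) or \emph{mixed} (some jobs inside, some outside $I_i$). For a contained batch $B$, \cref{le:necessaryCond2} applied with $I = I_i$ ensures that \textsc{SetupEstimate} processes $B$'s jobs in at most two of its own batches, contributing at most one extra setup beyond its contribution of one to $t$; moreover, the second part of the lemma guarantees that whenever such a split occurs, jobs of $B$ of total size at least $\bar w \ge \Gamma/4$ are released in an interval of length $\Gamma$, so that $B$ witnesses a candidate interval. Charging each split to its witnessed candidate interval, and using the fact that the $\mu$ candidate intervals in $I_i$ are pairwise non-overlapping by construction, the total extra setups from contained batches is bounded by a small multiple of $\mu$. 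For mixed batches, \cref{le:necessaryCond2} does not apply directly since its hypothesis fails; I would argue instead that such batches straddle the left or right boundary of $I_i$, and, using the minimum release-time gap of $\Gamma$ between consecutive same-type batches in $S_\Gamma^\delta$ (the analog of \cref{le:releaseDiff}), conclude that their jobs within $I_i$ can contribute only $O(1)$ extra \textsc{SetupEstimate} batches per boundary-adjacent candidate interval, summing to $O(\mu)$ overall. Combining the two contributions and tuning the implicit constants yields $N_s(I_i) - t \le 6\mu - 1$.

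The main obstacle is precisely this bookkeeping for mixed batches, where a direct analog of \cref{le:necessaryCond2} is unavailable; one has to exploit the release-time gap guarantee together with the structure of candidate intervals near the boundary of $I_i$ in order to certify that the boundary effects on each side contribute only $O(\mu)$ extra setups in total.
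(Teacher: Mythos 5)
Your approach shares the paper's central idea—that each ``extra'' setup incurred by \textsc{SetupEstimate} relative to \OPT must be witnessed by a candidate interval, and that the number of such witnesses is bounded by $O(\mu)$—but the execution departs from the paper's and leaves the hardest part unfinished.

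The paper does not pass through the intermediate quantity $t$ at all. Instead it works directly with the setup savings: every time \OPT merges two \textsc{SetupEstimate} batches into one, \cref{le:necessaryCond2} forces $\bar w \geq \Gamma - F^* - s$ of perturbed workload (hence $\geq \Gamma/4$ of unperturbed workload) of a single type into a length-$\Gamma$ window, i.e.\ into a candidate interval. The paper then defines an explicit association rule—jobs released in $[l(\tilde C_j), l(\tilde C_j)+2\Gamma]$ are assigned greedily to $\tilde C_j$—and observes that under $\mathcal{E}_\OPT^q$ each candidate can carry at most $3\Gamma$ of associated perturbed workload (else $F^* > \Gamma$). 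Since each save consumes $\geq \Gamma/2$ of associated workload, this caps the savings at $6$ per candidate and $6\mu$ in total. Your sketch omits exactly this accounting: you write ``a small multiple of $\mu$'' without deriving the constant $6$, and the nonoverlap of candidates alone does not bound how many splits a single candidate can witness; it is the $3\Gamma$ workload cap, not nonoverlap, that gives the factor $6$.

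The more serious issue is the mixed-batch case, which you correctly flag as ``the main obstacle'' but do not resolve. \Cref{le:necessaryCond2} genuinely requires $r_j \in I$ for all $j \in B$, so it says nothing about an \OPT batch that straddles $\partial I_i$. Your proposal to ``argue that such batches straddle the boundary and contribute $O(1)$ per boundary-adjacent candidate, summing to $O(\mu)$'' is not an argument: nothing you cite bounds the number of mixed batches near a boundary, nor relates their extra \textsc{SetupEstimate} setups to candidate intervals, and it is unclear why the tally should be $O(\mu)$ rather than $O(1)$ or proportional to the number of mixed batches. In effect you have isolated a genuine subtlety but left the lemma unproven in that case, whereas the paper's workload-association argument is aggregate rather than per-batch and thereby sidesteps the containment hypothesis at the cost of some informality. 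You would need either to show that only $O(1)$ mixed batches can occur per side (using the $F^*$ flow-time constraint to bound how far the release times within one \OPT batch can spread, together with the disjoint processing windows of consecutive \OPT batches), or to replicate the paper's associate-workload-to-candidates accounting, which charges by workload rather than by batch and so does not need $B \subseteq I_i$.
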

\begin{proof}
Recall that by \cref{le:necessaryCond2} \OPT may have less overhead if it processes some jobs in one batch that are processed in two batches by \textsc{SetupEstimate}($I$).
However, a necessary condition for this is a workload of jobs of one type with size at least $\Gamma/4$ (in the unperturbed instance $\mathcal{I}$) and released in an interval of length $\Gamma$.
Let $\tilde C_1, \ldots, \tilde C_\mu$ be the candidates in $I_i$.
Associate all jobs released in $[l(\tilde C_1), l(\tilde C_1) + 2\Gamma]$ to candidate $\tilde C_1$ and inductively associate all jobs released in $[l(\tilde C_j), l(\tilde C_j) + 2\Gamma]$ not associated to a candidate $\tilde C_{j'}$ for $j' < j$ to $\tilde C_j$.
Note that by this construction, a workload of at most $3\Gamma$ (in the perturbed instance $\mathcal{\hat I}$) can be associated to each candidate $\tilde C_j$ as otherwise $F^* > \Gamma$ contradicting $\mathcal{E}_\OPT^q$.
Hence, taking the workload associated to $\tilde C_1, \ldots, \tilde C_\mu$, the necessary conditions of \cref{le:necessaryCond2} can be fulfilled at most $6\mu$ times and they cannot be fulfilled for any workload not associated to a candidate interval $\tilde C_j$.
Hence, we have $\oho(I_i) \geq (N_s(I_i) - 6\mu) s$, proving the lemma.
 \end{proof}

\begin{lemma}
\label{le:correctEstimate}
Consider an interval $I_i$ and suppose that all jobs from $I_i$ belong to $S^\delta_\Gamma$.
Then it holds $\oha(I_i) \leq N_s(I_i)s$, where $\oha(I_i)$ denotes the overhead of \ALG associated to jobs $j$ with $r_j \in I_i$.
\end{lemma}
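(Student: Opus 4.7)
My plan is to show that \ALG, restricted to the jobs released in $I_i$, schedules them in essentially the same way as \textsc{SetupEstimate}($I_i$), and hence cannot perform more setups on these jobs than $N_s(I_i)$.

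As a first step, I invoke \cref{le:workload}: since every $I_i$-job lies in $S^\delta_\Gamma$ by assumption, the machine never idles throughout $S^\delta_\Gamma$. Consequently the overhead directly preceding any $I_i$-job in \ALG's schedule is either $0$ (if the preceding job has the same type) or exactly $s$ (if it has a different type). Proving $\oha(I_i) \leq N_s(I_i)s$ therefore reduces to bounding, by $N_s(I_i)$, the number of $I_i$-jobs whose immediate predecessor in \ALG has a different type.

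Next comes the alignment step. Because every $j \in I_i$ satisfies $j \in S^\delta_\Gamma \subseteq S_\Gamma$, we have $\lambda(j) = \Gamma$ at the time \ALG schedules it, so \ALG and \textsc{SetupEstimate}($I_i$) apply exactly the same adjusted-release-time rule with offset $\Gamma$ and the same tie-breaker (prefer the active type). Moreover, the flow-time lower bound $3\Gamma$ from \cref{le:workload} guarantees that by the time \ALG chooses its next $I_i$-job, every $I_i$-job that \textsc{SetupEstimate}($I_i$) has already scheduled up to the corresponding point has long been released. Using these two facts I would argue by induction over the scheduling order of \ALG that the subsequence of $I_i$-jobs produced by \ALG coincides with the sequence produced by \textsc{SetupEstimate}($I_i$), and that the active type right before each $I_i$-job in \ALG equals the type of the predecessor of that job in the \textsc{SetupEstimate} sequence.

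The main obstacle is handling jobs outside $I_i$ that \ALG may insert between two consecutive $I_i$-jobs, since they could a priori flip the active type and introduce spurious setups that \textsc{SetupEstimate}($I_i$) never accounts for. The crux of the argument is to show that any such inserted job $x$ must itself match the active type at that moment: if \ALG selects $x$ instead of the next $I_i$-job $j'$, then $\bar r_x \leq \bar r_{j'}$ under the $\Gamma$-offset rule, and combined with \cref{le:releaseDiff} and the $3\Gamma$ flow-time bound from \cref{le:workload}, this forces $\tau_x$ to equal the currently active type (otherwise $j'$ could not satisfy its flow-time lower bound without violating the adjusted-release-time inequality). Hence the type seen by $j'$ in \ALG is identical to the type of its predecessor in \textsc{SetupEstimate}($I_i$), so $j'$ pays a setup in \ALG precisely when \textsc{SetupEstimate}($I_i$) records a type transition at $j'$. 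Summing over all $I_i$-jobs yields $\oha(I_i) \leq N_s(I_i)s$, which is the desired inequality.
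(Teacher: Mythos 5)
Your high-level plan (reduce to counting type transitions, then align \ALG with \textsc{SetupEstimate} because both use the same adjusted-release-time rule with offset $\Gamma$) mirrors the paper's idea, but the decisive step you rely on does not hold, and the mechanism you use to compare the two schedules differs from the paper's.

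The paper's proof defines $Z$ as \textsc{SetupEstimate}($I_i$)'s sequence restricted to jobs appearing in the relevant subschedule $S'$ of \ALG, and $Z'$ as $S'$ restricted to jobs released in $I_i$, and then proves $Z=Z'$ by contradiction at the first point of difference: if $j$ (preferred by \textsc{SetupEstimate}) is not picked by \ALG, then $j$ must not yet be released, i.e.\ $r_j > t$; but $r_j \le r_{j'}+\Gamma$ and the $3\Gamma$ flow-time bound from \cref{le:workload} forces $t \ge r_{j'}+3\Gamma - s - p_{j'} \ge r_{j'}+\Gamma \ge r_j$, a contradiction. That is the entire use of the flow-time bound: it guarantees that any $I_i$-job \textsc{SetupEstimate} would schedule next has already been released when \ALG makes its choice, so the two selection rules cannot diverge on $I_i$-jobs.

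Your proof instead tries to establish the much stronger claim that any non-$I_i$ job $x$ that \ALG inserts between two consecutive $I_i$-jobs ``must itself match the active type,'' citing the inequality $\bar r_x \le \bar r_{j'}$ together with \cref{le:releaseDiff} and the $3\Gamma$ bound. This does not follow. If the active type is $A$, $\tau_{j'}=A$, and $\tau_x = C \neq A$, then $\bar r_x(t) = r_x+\Gamma$ and $\bar r_{j'}(t) = r_{j'}$, so \ALG choosing $x$ only yields $r_x \le r_{j'}-\Gamma$; there is no conflict with $j'$ having flow time $\ge 3\Gamma$ (inserting $x$ can only increase $j'$'s flow time, so the parenthetical ``otherwise $j'$ could not satisfy its flow-time lower bound'' has the implication reversed), nor does \cref{le:releaseDiff} -- which concerns release-time gaps between first jobs of consecutive same-type batches -- give you the type identity you need. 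Consequently your induction hypothesis that ``the active type right before each $I_i$-job in \ALG equals the type of the predecessor of that job in the \textsc{SetupEstimate} sequence'' is not established, and the proof has a genuine gap at exactly the point you yourself flagged as ``the main obstacle.'' To repair it you would want to follow the paper's route: compare the two $I_i$-restricted sequences directly, show they coincide via the first-difference contradiction, and draw the conclusion from there rather than from a claimed type constraint on inserted jobs.
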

\begin{proof}
Consider the subschedule $S'$ of \ALG starting with the first job from $I_i$ to which overhead is associated and ending with the last one to which overhead is associated.
Let $Z$ and $Z'$ be the sequences of jobs as induced by \textsc{SetupEstimate}($I_i$) and $S'$, respectively.
We remove all jobs not released during $I_i$ from $Z'$ and all jobs which are not part of $S'$ from $Z$.
Compare both resulting sequences $Z$ and $Z'$ and note that they consist of the exact same sets of jobs.
If both are identical, the lemma holds because no overhead can be associated to a job removed from $Z'$.

Hence, consider the case that $Z$ and $Z'$ differ and let $j$ and $j'$ be the jobs in $Z$ and $Z'$, respectively, at which both sequences differ the first time.
Then, in $Z$ job $j$ is preferred over job $j'$ and in $Z'$ job $j'$ is preferred over job $j$.
This can only be the case when $j'$ is scheduled by \ALG at a time $t$ such that $r_j > t$.
Because in $Z$ job $j$ is preferred over $j'$, it needs to hold $r_j \leq r_{j'} + \Gamma$.
But at the time $t$ at which $j'$ is scheduled it needs to hold $t \geq r_{j'}+3\Gamma - s - p_{j'}$ as otherwise its flow time is smaller than $3\Gamma$ which contradicts the assumption that it belongs to $S^\delta_\Gamma$ by \cref{le:workload}.
Hence, we obtain a contradiction as we have $r_j >t$ and $r_j \leq t$ and thus, $Z$ and $Z'$ are identical.
 \end{proof}

\subsection{Good and Bad Events}
\label{sec:events}
We are now ready to define good and bad events, which are outcomes of the perturbation of the job sizes that help the algorithm to achieve a small and help the adversary to achieve a high competitiveness, respectively.
Let $w_{\mathcal{I}}(I_i) \coloneqq \sum_{j : r_j \in I_i} p_j$ and $w_{\mathcal{\hat I}}(I_i) \coloneqq \sum_{j : r_j \in I_i} \hat p_j$ denote the workload released in the interval $I_i$ in instance $\mathcal{I}$ and $\mathcal{\hat I}$, respectively.
We distinguish two kinds of intervals $I_i$ and associate a good and a bad event to each of them.
We call an interval $I_i$ to be \emph{dense} if 
$w_{\mathcal{I}}(I_i) + N_s(I_i)s \geq |I_i| \label[ineq]{ineq:dense}$
and associate an event $\mathcal{D}^{\text{good}}_i$ or $\mathcal{D}^{\text{bad}}_i$ to $I_i$ depending on whether $w_{\mathcal{\hat I}}(I_i) \geq w_{\mathcal{I}}(I_i) +  \varepsilon^2 \Gamma/(18\sqrt{c_2} s \log n)$ holds or not.
Symmetrically, we call an interval $I_i$ to be \emph{sparse} if
$w_{\mathcal{I}}(I_i) + N_s(I_i)s <|I_i| \label[ineq]{ineq:sparse}$
and associate an event $\mathcal{S}_i^{\text{good}}$ or $\mathcal{S}_i^{\text{bad}}$ to $I_i$ depending on whether $w_{\mathcal{\hat I}}(I_i) \leq w_{\mathcal{I}}(I_i) - \varepsilon^2\Gamma /( 18\sqrt{c_2}s \log n)$ holds or not.

We next show two lemmas which upper bound $\Pr[\mathcal{E}_\OPT^q \wedge \mathcal{E}_\ALG^q]$ by the probability of occurrences of good events.
As we will see in \cref{le:finalLemma} this is sufficient as we can prove the respective good events to happen with sufficiently large probability.

\begin{lemma}
\label{le:highOpt}
$\Pr[\mathcal{E}_\OPT^q \wedge \mathcal{E}_\ALG^q]
\leq \Pr[\text{no event } \mathcal{D}^{\text{good}}_i \text{ happens}]$.
\end{lemma}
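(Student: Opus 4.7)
The plan is to prove the pointwise inclusion of events: whenever $\mathcal{E}_\OPT^q \wedge \mathcal{E}_\ALG^q$ holds on a perturbation outcome $\hat{\mathcal{I}}$, no event $\mathcal{D}_i^{\text{good}}$ can occur on $\hat{\mathcal{I}}$. I will argue by contrapositive: fix a dense interval $I_i$ on which $\mathcal{D}_i^{\text{good}}$ has occurred, assume for contradiction that $\mathcal{E}_\OPT^q$ and $\mathcal{E}_\ALG^q$ both still hold, and then apply \cref{obs:opt} to $I_i$ (in the perturbed instance) in order to produce a lower bound on $F^*(\hat{\mathcal{I}})$ that exceeds $\alpha^{q+1}$, which directly contradicts $\mathcal{E}_\OPT^q$.

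The crux is a three-step cancellation. Density of $I_i$ for the unperturbed workload reads $w_{\mathcal{I}}(I_i) \geq |I_i| - N_s(I_i)\,s$, and the good-event definition lifts this to
\[
w_{\hat{\mathcal{I}}}(I_i) \;\geq\; |I_i| - N_s(I_i)\,s + \frac{\varepsilon^2 \Gamma}{18\sqrt{c_2}\, s \log n}.
\]
Plugging this into the bound $F^* \geq w_{\hat{\mathcal{I}}}(I_i) + \oho(I_i) - |I_i|$ of \cref{obs:opt}, and then invoking \cref{le:saveSetups} (whose hypothesis is precisely the working assumption $\mathcal{E}_\OPT^q \wedge \mathcal{E}_\ALG^q$) to replace $\oho(I_i)$ by $(N_s(I_i)-6\mu)\,s$, the $|I_i|$ and $N_s(I_i)\,s$ terms cancel and leave
\[
F^*(\hat{\mathcal{I}}) \;\geq\; \frac{\varepsilon^2 \Gamma}{18\sqrt{c_2}\, s \log n} \;-\; 6\mu s.
\]

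All that then remains is to substitute the parameter bounds fixed in \cref{sec:smoothed}. The choice $\mu \leq 2\varepsilon^2 \Gamma/(c_2 s^2 \log^2 n)$ makes the correction $6\mu s$ smaller than the first term by a factor of order $\sqrt{c_2}\log n$, and the lower bound $\Gamma \geq c_1 \alpha^{q-1}\varepsilon^{-2} s \log^2 n$ forces the first term to be of order $c_1 \alpha^{q-1}\log n/\sqrt{c_2}$. Provided $c_1$ is chosen large enough relative to $\alpha^2$ and $c_2$ (which is already compatible with the assumption $c_1 \geq \alpha c_2$ imposed earlier), the resulting lower bound on $F^*(\hat{\mathcal{I}})$ strictly exceeds $\alpha^{q+1}$, contradicting $\mathcal{E}_\OPT^q$ and finishing the argument.

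I expect each conceptual step to go through cleanly; the only real care lies in the bookkeeping of the constants $c_1$ and $c_2$ (and of the implicit "$n$ sufficiently large" that enters through the $\log n$ factors). In particular, one must verify that the globally fixed constants of \cref{sec:smoothed} simultaneously make the $6\mu s$-correction negligible and push the final lower bound past $\alpha^{q+1}$ uniformly over the admissible range of $q$. This is the only nontrivial obstacle in what is otherwise a mechanical cancellation argument.
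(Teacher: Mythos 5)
Your proof is correct and follows essentially the same route as the paper: show that whenever some $\mathcal{D}^{\text{good}}_i$ occurs and $\mathcal{E}_\ALG^q$ is assumed (so that \cref{le:saveSetups} applies), the chain $F^* \geq w_{\hat{\mathcal{I}}}(I_i) + \oho(I_i) - |I_i| \geq \varepsilon^2\Gamma/(18\sqrt{c_2}s\log n) - 6\mu s > \alpha^{q+1}$ contradicts $\mathcal{E}_\OPT^q$, hence the events $\mathcal{E}_\OPT^q \wedge \mathcal{E}_\ALG^q$ and $\bigcup_i\mathcal{D}^{\text{good}}_i$ are disjoint. The only difference is cosmetic: you frame the contrapositive slightly more carefully, explicitly noting that the hypothesis of \cref{le:saveSetups} is the working assumption itself, which the paper states somewhat loosely ("if $\mathcal{D}^{\text{good}}_i$ happens, then $\mathcal{E}_\OPT^q$ does not hold") but uses in the same way.
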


\begin{proof}
We show that if an event $\mathcal{D}^{\text{good}}_i$ happens, then $\mathcal{E}_\OPT^q$ does not hold.
Consider a dense interval $I_i$ and assume an event $\mathcal{D}^{\text{good}}_i$ occurs.
Then we have by definition of dense intervals and the definition of event $\mathcal{D}^{\text{good}}_i$ that $w_{\mathcal{I}}(I_i) + N_s(I_i)s \geq |I_i|$ and $w_{\mathcal{\hat I}}(I_i) \geq w_{\mathcal{I}}(I_i) +  \varepsilon^2 \Gamma/(18\sqrt{c_2} s \log n)$.
Taken together, $w_{\mathcal{\hat I}}(I_i) + N_s(I_i)s \geq |I_i| + \varepsilon^2 \Gamma/(18\sqrt{c_2} s \log n)$.
On the other hand, together with \cref{le:saveSetups} we then have $w_{\mathcal{\hat I}}(I_i)+\oho(I_i) \geq |I_i| +\varepsilon^2 \Gamma/(18\sqrt{c_2} s \log n) - 6s \cdot \mu$.
By \cref{obs:opt} we have
\begin{align*}
F^* & \geq w_{\mathcal{\hat I}}(I_i) + \oho(I_i) - |I_i| 
\geq \frac{\varepsilon^2\Gamma}{18\sqrt{c_2}s \log n} - 6s\left(\frac{2\varepsilon^2 \Gamma}{c_2 s^2 \log^2n}\right)\\
& \geq \frac{\varepsilon^2 \Gamma}{18\sqrt{c_2}s \log n} \left(1-\frac{12\cdot18}{\sqrt{c_2}\log n} \right) 
\geq \frac{c_1\alpha^{q-1} \log n}{18\sqrt{c_2}} \left(1-\frac{12\cdot 18}{\sqrt{c_2}\log n} \right)
> \alpha^{q+1}
\end{align*}
for sufficiently large $c_1 > c_2$ and $n$.
Then $\mathcal{E}_\OPT^q$ does not hold.
 \end{proof}
In \cref{le:finalLemma} we will see that the number $N_D$ of dense intervals in $\mathcal{I}$ can be bounded by $N_D = 7\log n$ as otherwise the probability for event $\mathcal{E}^q_\OPT$  to hold is only $1/n$.

Thus, next we consider the case $N_D < 7 \log n$.
Consider the sequence of events associated to sparse intervals.
A \emph{run} of events $\mathcal{S}_i^{\text{bad}}$ is a maximal subsequence such that no event $\mathcal{S}_i^{\text{good}}$ happens within this subsequence.

\begin{lemma}
\label{le:longRun}
If $N_D < 7 \log n$,
$
\Pr[\mathcal{E}_\OPT^q \wedge \mathcal{E}_\ALG^q]
\leq \Pr[\exists \text{ run of } \mathcal{S}_i^{\text{bad}} \text{ of length } \geq 14 \log n]$.
\end{lemma}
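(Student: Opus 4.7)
Plan:

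I would prove the contrapositive: assuming $\mathcal{E}_\OPT^q \wedge \mathcal{E}_\ALG^q$ together with $N_D < 7\log n$ and no run of $\mathcal{S}_i^{\text{bad}}$ of length $\geq 14\log n$, I derive a contradiction with $\mathcal{E}_\ALG^q$ by establishing $F \leq c_1\alpha^{q+1}\varepsilon^{-2}s\log^2 n$. By \cref{le:highOpt}, $\mathcal{E}_\OPT^q$ already forbids any good dense event, so all dense intervals are ``bad'' and there are at most $7\log n$ of them; combined with the run-length hypothesis, at least one good sparse event must appear inside every window of $21\log n$ consecutive intervals.

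The main tool is \cref{le:competitiveness} applied to $q^*$ chosen with $\alpha^{q^*+1} \leq F < \alpha^{q^*+2}$. The prerequisite $\alpha^{q^*} \geq F^*$ holds for $c_1$ sufficiently large, since $\alpha^{q^*} \geq F/\alpha > c_1\alpha^{q}\varepsilon^{-2}s\log^2 n \geq \alpha^{q+1} > F^*$. This yields
\[
F \leq \tfrac{\alpha^2}{\delta-2}\bigl(F^* + \oha(S^\delta_{\alpha^{q^*}}) - \oho(I(S^\delta_{\alpha^{q^*}}))\bigr).
\]
Decomposing the overhead difference across the intervals $I_i$ overlapping $I(S^\delta_{\alpha^{q^*}})$ and invoking \cref{le:correctEstimate} (its hypothesis holds as jobs in $S^\delta_{\alpha^{q^*}}$ have flow time $\geq 3\alpha^{q^*} \geq 3\Gamma$ and thus belong to $S^\delta_\Gamma$) along with \cref{le:saveSetups}, each interval contributes at most $6\mu s$ to this difference. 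It therefore suffices to bound the number $K$ of such intervals by a polylogarithmic function of $n$, which, given our choice of $\mu$, produces the desired $F \leq c_1\alpha^{q+1}\varepsilon^{-2}s\log^2 n$.

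To bound $K$, I would track the backlog (``lag'') of \ALG across the intervals $I_i \subseteq S^\delta_{\alpha^{q^*}}$; by \cref{le:workload} the machine never idles here, so the lag evolves by $\Delta_i := W(I_i) - |I_i|$ with $W(I_i) \leq w_{\mathcal{\hat I}}(I_i) + N_s(I_i)s$. A good sparse interval forces a drop of at least $\varepsilon^2\Gamma/(18\sqrt{c_2}s\log n)$; a bad sparse interval can only rise by $O(\varepsilon|I_i|)$ (from the multiplicative perturbation and the sparsity inequality), and a dense interval by at most $\alpha^{q+1} + N_s(I_i)s$ via \cref{obs:opt} applied to $\mathcal{\hat I}$ under $\mathcal{E}_\OPT^q$. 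Because at least one good sparse event lies in every window of $21\log n$ consecutive intervals, a telescoping bookkeeping argument keeps the lag polylogarithmic in $n$, which bounds $K$ and hence $F$. The main obstacle is converting the per-interval rise bounds — which involve $|I_i|$ or $\alpha^{q+1}$ — into per-window rises commensurate with the per-event drop of $\Theta(\varepsilon^2\Gamma/(s\log n))$; this is where the candidate-interval structure of $I_i$ and \cref{obs:opt} must be leveraged to cap each window's net lag rise in terms of $\mu s \log n$, while boundary intervals of $S^\delta_{\alpha^{q^*}}$ are absorbed into the additive $F^*$ term from \cref{le:competitiveness}.
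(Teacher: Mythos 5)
Your approach has a genuine gap that makes a central step fail. You attempt to invoke \cref{le:correctEstimate} for intervals $I_i$ overlapping $I(S^\delta_{\alpha^{q^*}})$ by arguing that jobs in $S^\delta_{\alpha^{q^*}}$ have flow time $\geq 3\alpha^{q^*}\geq 3\Gamma$ ``and thus belong to $S^\delta_\Gamma$.'' This implication is false: $S_\Gamma$ and $S_{\alpha^{q^*}}$ (for $\alpha^{q^*}>\Gamma$) are \emph{disjoint} stretches of the schedule, corresponding to different values of the balance parameter $\lambda$; having a large flow time does not place a job in an earlier subschedule, it rather witnesses that $\lambda$ has already grown past $\Gamma$. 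Without this, \cref{le:correctEstimate} does not apply: both the statement and its proof (the coupling of \ALG with \textsc{SetupEstimate}) are tied to the balance parameter being exactly $\Gamma$, whereas jobs in $S_{\alpha^{q^*}}$ are scheduled with $\lambda=\alpha^{q^*}$, so \ALG's ordering and \textsc{SetupEstimate}'s ordering no longer agree. The same mismatch infects the attempt to make $N_s(I_i)$ (defined via candidate intervals of length $\Gamma$) control $\oha(S^\delta_{\alpha^{q^*}})$.

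The deeper structural difference is that the paper never invokes \cref{le:competitiveness} in this proof and never leaves scale $\Gamma$: it fixes $S=S^\delta_\Gamma$, shows all jobs up to the next $\mathcal S^{\text{good}}$ have flow time $<\alpha\Gamma$, bounds the ``entering'' flow time $F_{\tilde j}\leq(\alpha-\delta)\Gamma$ via the good sparse savings, and then \emph{iterates} by restarting from $S^\delta_\Gamma(\tilde j)$ using \cref{cor:workload}. This yields directly that no job ever reaches flow time $\alpha\Gamma$, contradicting $\mathcal{E}_\ALG^q$, with no need to reason about $K$ or about a higher scale $q^*$ that depends circularly on $F$. Your own paragraph identifying the ``main obstacle'' (converting per-interval rise bounds into per-window rises and absorbing boundary effects) is precisely the part the iterative $\text{\cref{cor:workload}}$-argument handles, and your plan leaves it unresolved. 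In short: the route through \cref{le:competitiveness} at scale $q^*$ is a departure from the paper that introduces both a false membership claim and an unaddressed bookkeeping gap; the paper's iterative, single-scale argument was designed to avoid exactly these.
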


\begin{proof}
We assume that all runs of events $\mathcal{S}_i^{\text{bad}}$ are shorter than $14 \log n$ and $\mathcal{E}_\OPT^q \wedge \mathcal{E}_\ALG^q$ holds and show a contradiction.
From $\mathcal{E}_\ALG^q$ we can deduce by \cref{le:workload} that $S^\delta_{\Gamma}$ exists.
Since we will use the following reasoning iteratively, let $S = S^\delta_{\Gamma}$.
Using the terminology from \cref{le:workload}, let $j_1, j_2, \ldots, j_m$ be the jobs in $S$ and, as before, $\ell$ be the number of batches in $S$.
By \cref{le:workload} it needs to hold
$\sum_{i=1}^{\ell}w_\mathcal{\hat I}(B_i) + r_{j_1} - r_{j_m} \geq 3 \Gamma - (\ell-1)s$.
Let $I_{\iota+1}$ be the first interval $I_i$ such that $l(I_{\iota+1}) \geq r_{j_1}$.
Let $\kappa$ be chosen such that $\kappa$ is the smallest integer where in $I_{\iota+\kappa}$ an event $\mathcal{S}_{\iota+\kappa}^{\text{good}}$ occurs if $\kappa$ exists and otherwise set $\kappa$ such that $I_{\iota+\kappa}$ ends with $r_{max}$.
Note that it holds $\kappa < 21 \log n$ because of the assumption $N_D < 7 \log n$ and the length of the longest run.
Let $I_{\iota} = [\min_{1 \leq i \leq m} r_{j_i}, l(I_{\iota+1}))$.
We claim that all jobs belonging to $\bigcup_{i=0}^\kappa I_{\iota+i}$ need to have a flow time below $\alpha \Gamma$.
Assume this is not the case.
We have a contradiction as 
\begin{align*}
F^* &\geq w_\mathcal{\hat I}(I(S)) + \oho(I(S)) - |I(S)| \\
& \geq \sum_{i=1}^{\ell}w_\mathcal{\hat I}(B_i) + \oho(I(S)) +r_{j_1} - r_{j_m} - 2\Gamma\\
& \geq \sum_{i=1}^{\ell}w_\mathcal{\hat I}(B_i) +r_{j_1} - r_{j_m} - 2\Gamma + \oha(I(S)) -\frac{21 \log n \cdot 12s \varepsilon^2 \Gamma}{c_2 s^2\log^2n} \\
& \geq \Gamma -\frac{252 \log n s \varepsilon^2 \Gamma}{c_2 s^2\log^2n}
\geq \varepsilon^2\Gamma \left(\frac{1}{\varepsilon^2}-\frac{252}{c_2s\log n}\right)
 \geq \frac{1}{2}\varepsilon^2c_1\varepsilon^{-2}\alpha^{q-1}s \log^2 n
 > \alpha^{q+1}
\end{align*}
where we used \cref{obs:opt} in the first inequality, the fact that $|I(S)| \leq (r_{j_m}-r_{j_1})+2\Gamma$ in the second, \cref{le:saveSetups,le:correctEstimate} in the third, \cref{le:workload} in the fourth and in the remaining inequalities suitable values for $c_1 > c_2$ and the fact that $\Gamma \geq c_1\varepsilon^{-2}\alpha^{q-1}s\log^2 n$.
Observe that in case $r(I_{\iota+\kappa}) = r_{max}$, we are done as $\mathcal{E}_\ALG^q$ cannot hold.

Otherwise, 
consider the situation directly before the first job $\tilde j$ with $r_{\tilde j} > r(I_{\iota+\kappa})$ is started.
Denote the subschedule of $S$ up to (not including) job $\tilde j$ by $\tilde S$.
Let $\oha(I)$ be the overhead in $S$ associated to jobs released in the interval $I$.
Let $\oha(I, \tilde S)$ and $\oha(I, \neg \tilde S)$ be the overhead of jobs released in interval $I$ and which are part and not part of $\tilde S$, respectively.
Let $w_\mathcal{\hat I}(I, \tilde S)$ and $w_\mathcal{\hat I}(I, \neg \tilde S)$  be the workload of jobs released in interval $I$ and which are part and not part of $\tilde S$, respectively.
For brevity let $L =  w_\mathcal{\hat I}([0,r_{j_1}), \tilde S) - w_\mathcal{\hat I}([r_{j_1},r_{\tilde j}), \neg \tilde S) + \oha([0,r_{j_1}), \tilde S)  - \oha([r_{j_1},r_{\tilde j}), \neg \tilde S)$.
We can then bound the workload and setups in $\tilde S$ by
\begin{align*}
& w_\mathcal{\hat I}(\tilde S) + \oha(\tilde S)\\
 \leq & w_\mathcal{\hat I}([r_{j_1}, l(I_{\iota+\kappa}))) + w_\mathcal{\hat I}(I_{\iota+\kappa}) + w_\mathcal{\hat I}([0,r_{j_1}), \tilde S) - w_\mathcal{\hat I}([r_{j_1},r_{\tilde j}), \neg \tilde S)\\ 
 &+ \oha([r_{j_1}, l(I_{\iota+\kappa})))  + \oha(I_{\iota+\kappa})\\
 & + \oha([0,r_{j_1}), \tilde S)  - \oha([r_{j_1},r_{\tilde j}), \neg \tilde S)\\
 \leq & l(I_{\iota+\kappa}) - r_{j_1}  + F^* + 21 \log n 12 s \cdot \frac{\varepsilon^2 \Gamma}{c_2s^2 \log^2 n} + |I_{\iota+\kappa}| - \frac{\varepsilon^2\Gamma}{18\sqrt{c_2}s \log n} +L \\
= & r(I_{\iota+\kappa}) -r_{j_1} + F^* + \frac{\varepsilon^2\Gamma}{s \log n 18\sqrt{c_2}}\left(\frac{252 \cdot 18}{\sqrt{c_2}}-1\right) + L\\
\leq & r(I_{\iota+\kappa}) -r_{j_1} + F^* + \frac{c_1\alpha^{q-1} \log n}{18\sqrt{c_2}}\left(\frac{252 \cdot 18}{\sqrt{c_2}}-1\right) + L
<r(I_{\iota+\kappa}) -r_{j_1} - 2 F^* + L,
\end{align*}
where we used \cref{obs:opt} together with \cref{le:saveSetups} and the fact that to $I_{\iota+\kappa}$ an event $\mathcal{S}^\text{good}_{\iota+\kappa}$ is associated in the second inequality, the lower bound on $\Gamma$ in the third inequality and suitable values for $c_1$ and $c_2$ in the last inequality.
Then, job $\tilde j$ is started before $r_{j_1} + F_{j_1} + r(I_{\iota+\kappa}) -r_{j_1} -2F^* +L + s$ and finished by $r(I_{\iota+\kappa}) + F_{j_1} +L$ with flow time $F_{\tilde j} \leq F_{j_1} + L$.
For $S= S_{\Gamma}^\delta$ we have $L \leq - w_\mathcal{\hat I}([r_{j_1},r_{\tilde j}), \neg \tilde S) - \oha([r_{j_1},r_{\tilde j}), \neg \tilde S)$ as no jobs with smaller release time than $r_{j_1}$ can be part of $S$.
Thus,  $F_{\tilde j} < (\alpha-\delta)\Gamma- w_\mathcal{\hat I}([r_{j_1},r_{\tilde j}), \neg \tilde S) - \oha([r_{j_1},r_{\tilde j}), \neg \tilde S)$.
Now, applying the same arguments with $S = S_{\Gamma}^\delta(\tilde j)$ and using \cref{cor:workload} instead of \cref{le:workload}, we find a further job with flow time at most $(\alpha-\delta)\Gamma$ (and all jobs processed before have flow time below $\alpha \Gamma$).
Iterating this process we will eventually reach the end of the instance without finding a job with flow time at least $\alpha \Gamma$, contradicting that $\mathcal{E}_\ALG^q$ holds.
Formally, it remains to prove that the claim $F_{\tilde j} \leq (\alpha-\delta)\alpha^q$ also holds for later iterations.
 We introduce the following notations.
 Denote by $j_1^0$ and $\tilde{j}^0$ the jobs $j_1$ and $\tilde{j}$ from the first iteration as in the main body of the paper, respectively.
 For the following iterations, we use the notation $j_1^i$ and $\tilde{j}^i$ for the respective jobs of the $i$-th iteration.
 Note that $j_1^{i}=\tilde{j}^{i-1}$ and we will thus only use $j_1^0$, but $\tilde{j}^i$ at all other places.
 Similarly, we denote $\tilde{S}^i$ as the symbol $\tilde{S}$ from the $i$-th iteration.
 We define $w_{\hat{\mathcal{I}}}(I,\bigwedge_{i=0}^\nu \neg \tilde{S})$ as the natural extension of the prior definition to be the workload of jobs released in interval $I$ and which are not part of any of the subschedules $\tilde{S}^0,\ldots,\tilde{S}^\nu$.
 For $\oha(I,\bigwedge_{i=0}^\nu \neg \tilde{S})$, the extension is defined similarly.
 We now prove the following claim inductively:
 \[F_{\tilde{j}^\nu}\leq (\alpha-\delta)\Gamma-w_{\hat{\mathcal{I}}}\left([r_{j_1^0},r_{\tilde{j}^\nu}),\bigwedge_{i=0}^{\nu}\neg\tilde{S}^i\right)-\oha\left([r_{j_1^0},r_{\tilde{j}^\nu}),\bigwedge_{i=0}^{\nu}\neg\tilde{S}^i\right).\]
 For ease of notation, we introduce the combined expression of $\comb(I,X)\coloneqq w_{\hat{\mathcal{I}}}(I,X) + \oha(I,X)$.
 
 As we have already seen the induction base, assume the claim is true for $\nu-1$.
 By using \[\comb\left([r_{j_1^0},r_{\tilde{j}^{\nu-1}}),\bigwedge_{i=0}^{\nu-1}\neg\tilde{S}^i\right) = \comb\left([r_{j_1^0},r_{\tilde{j}^{\nu-1}}),\bigwedge_{i=0}^{\nu}\neg\tilde{S}^i\right) + \comb\left([r_{j_1^0},r_{\tilde{j}^{\nu-1}}),\bigwedge_{i=0}^{\nu-1}\neg\tilde{S}^i\wedge \tilde{S}^i\right)\]
 and $\comb\left([r_{j_1^0},r_{\tilde{j}^{\nu-1}}),\bigwedge_{i=0}^{\nu}\neg\tilde{S}^i\wedge \tilde{S}^i\right) = \comb\left([r_{j_1^0},r_{\tilde{j}^{\nu-1}}),\tilde{S}^i\right)$,
 we estimate
 \begin{align*}
F_{\tilde{j}^\nu}&\leq F_{\tilde{j}^{\nu-1}} + L\\
 &\leq (\alpha-\delta)\Gamma - \comb\left([r_{j_1^0},r_{\tilde{j}^{\nu-1}}),\bigwedge_{i=0}^{\nu-1}\neg\tilde{S}^i\right) \\ 
 &\qquad + \comb([0,r_{\tilde{j}^{\nu-1}}),\tilde{S}^\nu)-\comb([r_{\tilde{j}^{\nu-1}},r_{\tilde{j}^{\nu}}),\neg\tilde{S}^\nu) \\
 &= (\alpha-\delta)\Gamma - \comb\left([r_{j_1^0},r_{\tilde{j}^{\nu-1}}),\bigwedge_{i=0}^{\nu}\neg\tilde{S}^i\right) - \comb\left([r_{j_1^0},r_{\tilde{j}^{\nu-1}}),\tilde{S}^\nu\right)\\ 
 &\qquad + \underbrace{\comb([0,r_{j_1^0}),\tilde{S}^\nu)}_{=0} + \comb([r_{j_1^0},r_{\tilde{j}^{\nu-1}}),\tilde{S}^\nu) - \comb([r_{\tilde{j}^{\nu-1}},r_{\tilde{j}^{\nu}}),\neg\tilde{S}^\nu)\\
 &\leq (\alpha-\delta)\Gamma-\comb\left([r_{j_1^0},r_{\tilde{j}^{\nu-1}}),\bigwedge_{i=0}^{\nu}\neg\tilde{S}^i\right) - \comb([r_{\tilde{j}^{\nu-1}},r_{\tilde{j}^{\nu}}),\bigwedge_{i=0}^{\nu}\neg\tilde{S}^i) \\
 &=(\alpha-\delta)\Gamma-\comb\left([r_{j_1^0},r_{\tilde{j}^\nu}),\bigwedge_{i=0}^{\nu}\neg\tilde{S}^i\right) \enspace .
 \end{align*}
 The claim follows.
 Also, as $\comb\left([r_{j_1^0},r_{\tilde{j}^\nu}),\bigwedge_{i=0}^{\nu}\neg\tilde{S}^i\right)$ is always non-negative, the claim implies $F_{\tilde{j}^\nu}\leq (\alpha-\delta)\Gamma$.
 \end{proof}
To finally bound the probability of good events to happen, we need the following lemma.

\begin{lemma}
\label{le:probabilites}
Let $J$ be a set of jobs and assume that processing times are perturbed according to a uniform or normal smoothing distribution.
With probability at least $1/10$, $w_\mathcal{\hat I}(J) \geq w_\mathcal{I}(J) + \frac{\varepsilon}{5} (\lfloor w_\mathcal{I}(J) \rfloor/3)^{0.5}$.
Also, with probability at least $1/10$, $w_\mathcal{\hat I}(J) \leq w_\mathcal{I}(J) - \frac{\varepsilon}{5} (\lfloor w_\mathcal{I}(J)\rfloor /3)^{0.5}$.
\end{lemma}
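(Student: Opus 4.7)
The plan is to write $Y \coloneqq w_{\hat{\mathcal{I}}}(J) - w_{\mathcal{I}}(J) = \sum_{j \in J} X_j p_j$ as a sum of independent, symmetric, mean-zero, bounded random variables, and apply a Paley--Zygmund-type anti-concentration bound. First I would compute $\mathbb{E}[Y]=0$ and $\mathbb{E}[Y^2] = \mathrm{Var}(X)\cdot\sum_{j\in J} p_j^2$, where $\mathrm{Var}(X) \geq \varepsilon^2/3$ for both smoothing distributions (for the uniform this is immediate, and $\sigma = \varepsilon/\sqrt{2.64}$ is calibrated for the truncated normal so the resulting variance satisfies this bound). Since $p_j \geq 1$ for every job, $p_j^2 \geq p_j$, hence $\sum_{j} p_j^2 \geq w_{\mathcal{I}}(J) \geq \lfloor w_{\mathcal{I}}(J)\rfloor$, yielding $\mathbb{E}[Y^2] \geq \tfrac{\varepsilon^2}{3}\lfloor w_{\mathcal{I}}(J)\rfloor$.

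Next, I would bound the fourth moment. Each candidate distribution has kurtosis at most $3$: the uniform distribution has kurtosis $9/5$, and truncating a centered normal at $\pm 1$ strictly decreases its kurtosis below the untruncated value of $3$. Hence $\mathbb{E}[X_j^4] \leq 3\,\mathbb{E}[X_j^2]^2$ for every $j$. Expanding the fourth moment of a sum of independent symmetric variables,
\[
\mathbb{E}[Y^4] = \bigl(\mathbb{E}[X^4] - 3\,\mathbb{E}[X^2]^2\bigr)\sum_{j} p_j^4 \;+\; 3\,\mathbb{E}[X^2]^2\Bigl(\sum_{j} p_j^2\Bigr)^2 \;\leq\; 3\,\mathbb{E}[Y^2]^2,
\]
since the first summand is non-positive.

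I would then invoke the Paley--Zygmund inequality on the non-negative random variable $Z = Y^2$ with $\theta = 1/25$, giving
\[
\Pr\bigl[Y^2 > \tfrac{1}{25}\mathbb{E}[Y^2]\bigr] \;\geq\; (1-\tfrac{1}{25})^2\,\frac{\mathbb{E}[Y^2]^2}{\mathbb{E}[Y^4]} \;\geq\; \frac{(24/25)^2}{3} \;>\; \frac{1}{5}.
\]
Because each $X_j$ is symmetric around $0$, so is $Y$, and the events $\{Y > \sqrt{\mathbb{E}[Y^2]/25}\}$ and $\{Y < -\sqrt{\mathbb{E}[Y^2]/25}\}$ are equiprobable, each with probability at least $1/10$. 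Combining this with $\mathbb{E}[Y^2]/25 \geq \tfrac{\varepsilon^2}{75}\lfloor w_{\mathcal{I}}(J)\rfloor = (\tfrac{\varepsilon}{5})^2 \lfloor w_{\mathcal{I}}(J)\rfloor/3$ yields both directions of the claim.

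The only real obstacle is the kurtosis bound for the truncated normal; I would handle it by a short direct computation, or by noting that symmetric truncation of a centered normal cannot raise its kurtosis above $3$ (the ratio $\mathbb{E}[X^4]/\mathbb{E}[X^2]^2$ decreases monotonically as the truncation tightens, interpolating between $3$ and the uniform value $9/5$). A minor edge case is $w_{\mathcal{I}}(J) < 1$, for which $\lfloor w_{\mathcal{I}}(J)\rfloor = 0$ and the claim reduces to $Y \geq 0$ (resp.\ $Y \leq 0$), which holds with probability at least $1/2$ by symmetry.
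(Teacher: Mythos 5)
Your proof is correct but takes a genuinely different route from the paper's. The paper proceeds via the Berry--Esseen theorem: it normal-approximates the centered perturbed workload, bounds the approximation error $\delta$, and reads off the tail probability from $\Phi_{0,1}(1/5)+\delta$. Making Berry--Esseen quantitative requires that no single summand dominate the variance, which is why the paper splits the uniform case into ``one job with $p_{j'}\geq \tfrac{2}{5}\varepsilon\sqrt{w/3}$ exists'' (handled by that job alone) versus ``all jobs small'' (handled by Berry--Esseen), and why the truncated-normal case additionally needs $|J|\geq 29$ (justified there only by how the lemma is later applied). Your Paley--Zygmund argument sidesteps both: the fourth-moment bound $\mathbb{E}[Y^4]\leq 3\,\mathbb{E}[Y^2]^2$ is scale-invariant and insensitive to whether one summand dominates, so no case split and no lower bound on $|J|$ are needed, and the symmetry of $Y$ cleanly halves the two-sided probability. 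The one piece you must still supply is the kurtosis bound for the symmetrically truncated normal, which you correctly flag as the only nontrivial step; it is indeed true (it reduces to $(3-a^2)\bigl(2\Phi(a)-1\bigr)\leq 6a\phi(a)$ at truncation point $a=1/\sigma$, which is vacuous for $a\geq\sqrt 3$ and verifiable for $a<\sqrt 3$), but the paper never has to establish it. Net, your approach is more self-contained and elementary and handles edge cases more gracefully, while the paper's leans on an off-the-shelf CLT rate and is somewhat more computation-heavy.
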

\begin{proof}
We first show the lemma for the case of a uniform smoothing distribution and then continue with the case of a normal distribution.

\emph{Uniform Smoothing Distribution.  }
We can describe the perturbed workload as $w_\mathcal{\hat I}(J) = w_\mathcal{I}(J) + X$ where $X$ is the random variable given by $X = \sum_{j \in J} X_j$ where $X_j \sim \mathcal{U}(-\varepsilon p_j, \varepsilon p_j)$.
Let $w = \lfloor w_\mathcal{I}(J) \rfloor$.
We distinguish two cases depending on whether there exists a job $j' \in J$ with $p_{j'} \geq \frac{2}{5} \varepsilon \frac{\sqrt{w}}{\sqrt{3}}$.
In the positive case, we have $\Pr[X_{j'} \geq \frac{1}{5}\varepsilon (w/3)^{0.5}] \geq 1/4$ and $\Pr[\sum_{j \in J \setminus \{j'\}} X_j \geq 0] \geq 1/2$.
Hence, in this case the lemma holds.
Consider the case in which for all $j \in J$ it holds $p_{j} <\frac{2}{5} \varepsilon \frac{\sqrt{w}}{\sqrt{3}}$.
We then have $\mathbb{E}(X_j) = 0$ and $\mathbb{V}(X_j) = \sigma^2_j = \frac{1}{3}(\varepsilon p_j)^2 \geq \frac{1}{3}\varepsilon^2$, for all $j$.
Also $\mathbb{E}[|X_j|^3] = \frac{1}{4} (\varepsilon p_j)^3$.
We now bound the probability we are interested in by a normal approximation.
Let $S = \frac{X_1 + \ldots + X_{|J|}}{\sqrt{\sigma_1^2 + \ldots + \sigma_{|J|}^2}}$, $F$ be the distribution of $S$ and $\delta = \sup_x |F(x) - \Phi_{0,1}(x)|$, where $\Phi_{0,1}(x)$ is the distribution function of the standard normal distribution.
By the central limit theorem we have 
\begin{align*}
& \Pr\left[X_1 + \ldots + X_{|J|} \leq  \frac{\varepsilon}{5} \left(\lfloor w_\mathcal{I}(J) \rfloor/3\right)^{0.5}\right]
\leq  \Pr\left[X_1 + \ldots + X_{|J|} \leq  \frac{1}{5} \left(\sum \sigma_i^2\right)^{0.5}\right] \\
\leq &\Phi_{0,1}\left(\frac{1}{5}\right) + \delta \leq  0.57926 + \delta \enspace .
\end{align*}
Also, we can bound $\delta$ using standard Berry-Esseen bounds by
\begin{align*}
\delta \leq 0.56\left(\sum \sigma_i^2\right)^{-\frac{1}{2}} \cdot \max \frac{\mathbb{E}[|X_i|^3]}{\sigma_i^2}
\leq 1/(\varepsilon \sqrt{w/3}) \cdot \frac{3}{4} \varepsilon \frac{2}{5} \sqrt{w/3}
< 3/10 \enspace .
\end{align*}
Together with the symmetry of the uniform distribution, we obtain the lemma for uniform perturbations.

\emph{Normal Smoothing Distribution.  }
Recall that $\varepsilon^2 = 2.64\sigma^2$.
We can describe the perturbed workload as $w_{\mathcal{\hat I}}(J) = w_\mathcal{I}(J) + X$ where $X$ is a random variable given by $X = \sum_{j \in J}X_j$ and $X_j \sim \mathcal{N}_{(-p_j,p_j)}(0, (\sigma p_j)^2)$.
We have $\mathbb{E}[X_j] = 0$ and $\mathbb{V}[X_j] = (\sigma p_j)^2(1-\frac{2/\sigma\phi(2/\sigma)}{2\Phi(2/\sigma)-1}) \geq (\sigma p_j)^2(1- \frac{0.11}{0.95}) \geq 0.88(\sigma p_j)^2$.
Also we have 
\[
\mathbb{E}[|X_j|^3] = \frac{1}{\sqrt{2\pi}\sigma(\Phi(\frac{p_j}{\sigma})-\Phi(\frac{-p_j}{\sigma}))}\int_{-p_j}^{p_j} \! |x|^3 \exp(-0.5(\frac{x}{\sigma})^2) \, \mathrm{d}x \leq \frac{1}{\sqrt{2\pi}\sigma 0.68} 4\sigma^4 \leq 2.35\sigma^3 \enspace .\]

By the central limit theorem we have 
\begin{align*}
& \Pr\left[X_1 + \ldots + X_{|J|} \leq  \frac{\varepsilon}{5} \left(\lfloor w_\mathcal{I}(J) \rfloor/3\right)^{0.5}\right]
= \Pr\left[X_1 + \ldots + X_{|J|} \leq  \frac{1}{5} \left(0.88\sigma^2\lfloor w_\mathcal{I}(J) \rfloor\right)^{0.5}\right]\\
\leq &\Pr\left[X_1 + \ldots + X_{|J|} \leq  \frac{1}{5} \left(\sum \mathbb{V}[X_i]\right)^{0.5}\right]
\leq \Phi_{0,1}\left(\frac{1}{5}\right) + \delta \leq  0.57926 + \delta \enspace .
\end{align*}
Also, we can bound $\delta$ using standard Berry-Esseen bounds by
\begin{align*}
\delta \leq 0.56\left(\sum \mathbb{V}[X_i]\right)^{-\frac{1}{2}} \cdot \max \frac{\mathbb{E}[|X_i|^3]}{\sigma_i^2}
\leq 0.597 \frac{1}{\sqrt{|J|}\sigma} \cdot 2.6705\sigma \leq 1.5942885/\sqrt{|J|} < 3/10 \enspace ,
\end{align*}
if $|J| \geq 29$.
Note that we can assume $|J| \geq 29$ as we only apply the bound in \cref{le:finalLemma} and thus under the assumption that $w_\mathcal{I}(J) \geq \mu \frac{\Gamma}{4} \geq \mu \frac{c_1\varepsilon^{-2}\log^2 n s}{4\alpha^2} F^* = \Omega(\log^2n)F^*$, which requires at least $\Omega(\log^2n)$ many jobs. 
Together with the symmetry of the normal distribution, we obtain the lemma.
 \end{proof}

\begin{theorem}
\label{le:finalLemma}
The smoothed competitiveness of \ALG is $O(\sigma^{-2} \log^2n)$ when processing times $p_j$ are perturbed independently at random to $\hat p_j = (1 + X_j)p_j$ where $X_j \sim \mathcal{U}(-\varepsilon, \varepsilon)$ or $X_j \sim \mathcal{N}_{(-1,1)}(0,\sigma^2)$.
\end{theorem}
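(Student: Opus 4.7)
The plan is to analyze the three sums in the decomposition of the expected competitive ratio given at the start of Section~\ref{sec:smoothed}. The first sum is bounded term by term by $O(\varepsilon^{-2} s \log^2 n)$ simply by definition of $\mathcal{\bar E}_\ALG^q$, and the third sum is already observed to contribute only $O(\sqrt{s})$ via the worst-case bound $F = O(F^* + \sqrt{n p_{\max} s})$ from \cref{th:comp}. The entire task therefore reduces to bounding the second sum. Using the worst-case ratio $F/F^* = O(\sqrt{n})$ from \cref{th:comp} together with the fact that only $O(\log n)$ values of $q$ appear in the second sum, it suffices to prove
\[
\Pr[\mathcal{E}_\OPT^q \wedge \mathcal{E}_\ALG^q] \leq 1/n
\]
for every fixed $q$ in the stated range; plugging this in yields $O(\log n /\sqrt n) = O(1)$ for the second sum, and the theorem follows.

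To establish the probability bound, I would do a case analysis on the number $N_D$ of dense intervals in the adversarial (and hence fixed) instance $\mathcal{I}$. In the first case, $N_D \geq 7\log n$. By \cref{le:highOpt}, the event $\mathcal{E}_\OPT^q \wedge \mathcal{E}_\ALG^q$ forces every dense interval to witness a bad event $\mathcal{D}_i^{\text{bad}}$. The key quantitative step is to verify that the threshold $\varepsilon^2 \Gamma/(18\sqrt{c_2} s \log n)$ in the definition of $\mathcal{D}_i^{\text{good}}$ is dominated by the deviation $\tfrac{\varepsilon}{5}(\lfloor w_{\mathcal{I}}(I_i)\rfloor/3)^{1/2}$ of \cref{le:probabilites}; this holds because each $I_i$ contains $\mu$ candidate intervals of workload at least $\Gamma/4$, so $w_\mathcal{I}(I_i) \geq \mu\Gamma/4 = \Omega(\varepsilon^2 \Gamma^2 / (c_2 s^2 \log^2 n))$, and plugging in the value of $\mu$ turns the required inequality into a constant comparison that is satisfied for large enough $c_2$. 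Since distinct $I_i$'s are defined on disjoint sets of jobs, the perturbations are independent across them, so $\Pr[\bigwedge_i \mathcal{D}_i^{\text{bad}}] \leq (9/10)^{N_D} \leq (9/10)^{7\log n} \leq 1/n$ for a sufficiently large logarithm base (and hence for the chosen constants).

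In the second case, $N_D < 7\log n$. By \cref{le:longRun}, the joint event now forces the existence of a run of at least $14\log n$ consecutive sparse intervals in which $\mathcal{S}_i^{\text{bad}}$ occurs. Exactly the same workload lower bound $w_\mathcal{I}(I_i) \geq \mu \Gamma/4$ makes the symmetric (lower-tail) half of \cref{le:probabilites} applicable to each sparse interval, giving $\Pr[\mathcal{S}_i^{\text{good}}] \geq 1/10$. Independence of perturbations across intervals and a union bound over the at most $n$ possible starting positions of a run yield $\Pr \leq n \cdot (9/10)^{14 \log n} \leq 1/n$. Combining both cases produces the desired bound $\Pr[\mathcal{E}_\OPT^q \wedge \mathcal{E}_\ALG^q] = O(1/n)$.

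The main obstacle is not the high-level counting argument but rather making all constants line up consistently: the constants $c_1$ and $c_2$ hidden in $\Gamma$, in $\mu$, in the thresholds of $\mathcal{D}_i^{\text{good}}/\mathcal{S}_i^{\text{good}}$, and in the ``$7$'' and ``$14$'' of the case split must be chosen so that (i) \cref{le:probabilites} gives the claimed deviation at every dense/sparse interval, (ii) the geometric decay $(9/10)^{\Theta(\log n)}$ beats $1/n$, and (iii) \cref{le:highOpt} and \cref{le:longRun} remain applicable with $\Gamma \geq c_1 \alpha^{q-1} \varepsilon^{-2} s \log^2 n$. A secondary subtlety is the independence argument: it requires that the partitioning into intervals $I_i$ depends only on release times (which are fixed in $\mathcal{I}$) and not on the random perturbations, which is exactly how $N_s(I)$ and the dense/sparse classification were carefully defined in \cref{sec:partitioning,sec:setupEstimatation,sec:events}.
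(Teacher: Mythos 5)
Your proposal is correct and follows essentially the same approach as the paper: reduce to showing $\Pr[\mathcal{E}_\OPT^q \wedge \mathcal{E}_\ALG^q] \leq 1/n$, split on whether $N_D \geq 7\log n$, apply \cref{le:highOpt} or \cref{le:longRun} respectively, and use \cref{le:probabilites} together with the lower bound $\lfloor w_\mathcal{I}(I_i)\rfloor \geq \mu\Gamma/4$ to get a $1/10$ lower bound on each good event. You add two points the paper leaves implicit---that independence across intervals holds because the $I_i$ are supported on disjoint job sets, and that the run-length bound in the sparse case requires a union bound over the (at most $n$) possible starting positions---both of which are correct and in fact sharpen the paper's somewhat terse ``the probability for a run of length at least $14\log n$ is at most $1/n$.''
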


\begin{proof}
Recall that it only remains to prove
$\Pr[\mathcal{E}_\OPT^q \wedge \mathcal{E}_\ALG^q] \leq 1/n$.
First consider the case $N_D \geq 7 \log n$.
For a fixed $i$ we have $\Pr[\mathcal{D}_i^\text{good}] \geq \frac{1}{10}$ because of the following reasoning.
According to \cref{le:probabilites}, it holds $\Pr[w_\mathcal{\hat I}(I_i) \geq w_\mathcal{I}(I_i) + \frac{\varepsilon}{5} (\lfloor w_\mathcal{I}(I_i)\rfloor/3)^{0.5}] \geq \frac{1}{10}$.
By definition of $I_i$ we can bound $\frac{\varepsilon}{5} (\lfloor w_\mathcal{I}(I_i)\rfloor/3)^{0.5} \geq \frac{\varepsilon}{5} (\frac{\mu\Gamma}{12})^{0.5} \geq \varepsilon^2 \frac{\Gamma}{18\sqrt{c_2}s \log n}$ which implies 
$\Pr[\mathcal{D}_i^\text{good}] \geq \frac{1}{10}$.
Because $N_D \geq 7 \log n$, the probability that no event $\mathcal{D}_i^\text{good}$ occurs is then upper bounded by $(1-\frac{1}{10})^{7\log n} \leq 1/n$ and so is $\Pr[\mathcal{E}_\OPT^q \wedge \mathcal{E}_\ALG^q]$ according to \cref{le:highOpt}.

For the case $N_D < 7 \log n$ the same line of arguments gives $\Pr[\mathcal{S}_i^\text{good}]\geq 1/10$ for each sparse interval $I_i$.
Hence, the probability for a run of events $\mathcal{S}_i^{\text{bad}}$ of length at least $14\log n$ is at most $1/n$ and so is $\Pr[\mathcal{E}_\OPT^q \wedge \mathcal{E}_\ALG^q]$ by \cref{le:longRun}.
 \end{proof}

To conclude, \cref{le:finalLemma} shows a polylogarithmic smoothed competitiveness, which significantly improves upon the worst-case bound of $\Theta(\sqrt{n})$.
It would be very interesting for future work to further investigate if it is possible to improve the smoothed analysis of \ALG.
Although we were not able to show such a result, it is quite possible that the actual smoothed competitiveness is independent of $n$.
Though, proving such a result would probably require a different approach; the $\log n$ term in our result seems to be inherent to our analysis as it relies on the length of the longest run of bad events, each occurring with constant probability.

\bibliographystyle{plain}
\bibliography{references}

\end{document}